
\documentclass[preprint,11pt]{elsarticle}
\usepackage[margin=1in]{geometry}




\usepackage{amssymb}
\usepackage{amsthm}
\usepackage{amsmath}
\usepackage{mathrsfs}
\usepackage[dvipsnames]{xcolor}
\usepackage{hyperref}
  \hypersetup{
    colorlinks=true,
    linkcolor=PineGreen,
    filecolor=magenta,      
    urlcolor=cyan}


\journal{}

\theoremstyle{definition}
\newtheorem{definition}{Definition}

\theoremstyle{plain}
\newtheorem{lemma}{Lemma}
\newtheorem{proposition}{Proposition}
\newtheorem{theorem}{Theorem}

\theoremstyle{remark}

\begin{document}

\begin{frontmatter}



\title{The Nature of Organization in Living Systems}


\author[lab1]{Pedro Márquez-Zacarías\fnref{label1}}
\author[lab1]{Andrés Ortiz-Muñoz\fnref{label11}}
\author[lab2,lab3]{and Emma P. Bingham}

\affiliation[lab1]{organization={Santa Fe Institute},
            addressline={1399 Hyde Park Rd}, 
            city={Santa Fe},
            postcode={87501}, 
            state={NM},
            country={USA}}
\affiliation[lab2]{organization={School of Physics, Georgia Institute of Technology}, 
            city={Atlanta},
            postcode={30332}, 
            state={GA},
            country={USA}}
\affiliation[lab3]{organization={Interdisciplinary Program in Quantitative Biosciences, Georgia Institute of Technology}, 
            city={Atlanta},
            postcode={30332}, 
            state={GA},
            country={USA}}

\fntext[label1]{These authors contributed equally. Correspondence: pedromaz@santafe.edu}

\begin{abstract}
Living systems are thermodynamically open but closed in their organization. In other words, even though their material components turn over constantly, a material-independent property persists, which we call \textit{organization}. Moreover, organization comes from within organisms themselves, which requires us to explain how this \textit{self}-organization is established and maintained. In this paper we propose a mathematical and conceptual framework to understand the kinds of organized systems that living systems are, aiming to explain how self-organization emerges from more basic elemental processes. Additionally, we map our own notions to existing traditions in theoretical biology and philosophy, aiming to bring the main formal ideas into conceptual congruence.
\end{abstract}

\begin{keyword}
Organisms \sep autonomy \sep self-organization

\end{keyword}

\end{frontmatter}


\begin{quote}
``\textit{In vain we force the living into this or that one of our molds. All the molds crack. They are too narrow, above all too rigid, for what we try to put into them.}"
\newline
--Henri Bergson\cite{bergson1984creative}.  
\end{quote}

\begin{figure}[ht]
    \centering
    \includegraphics[width=0.75\textwidth]{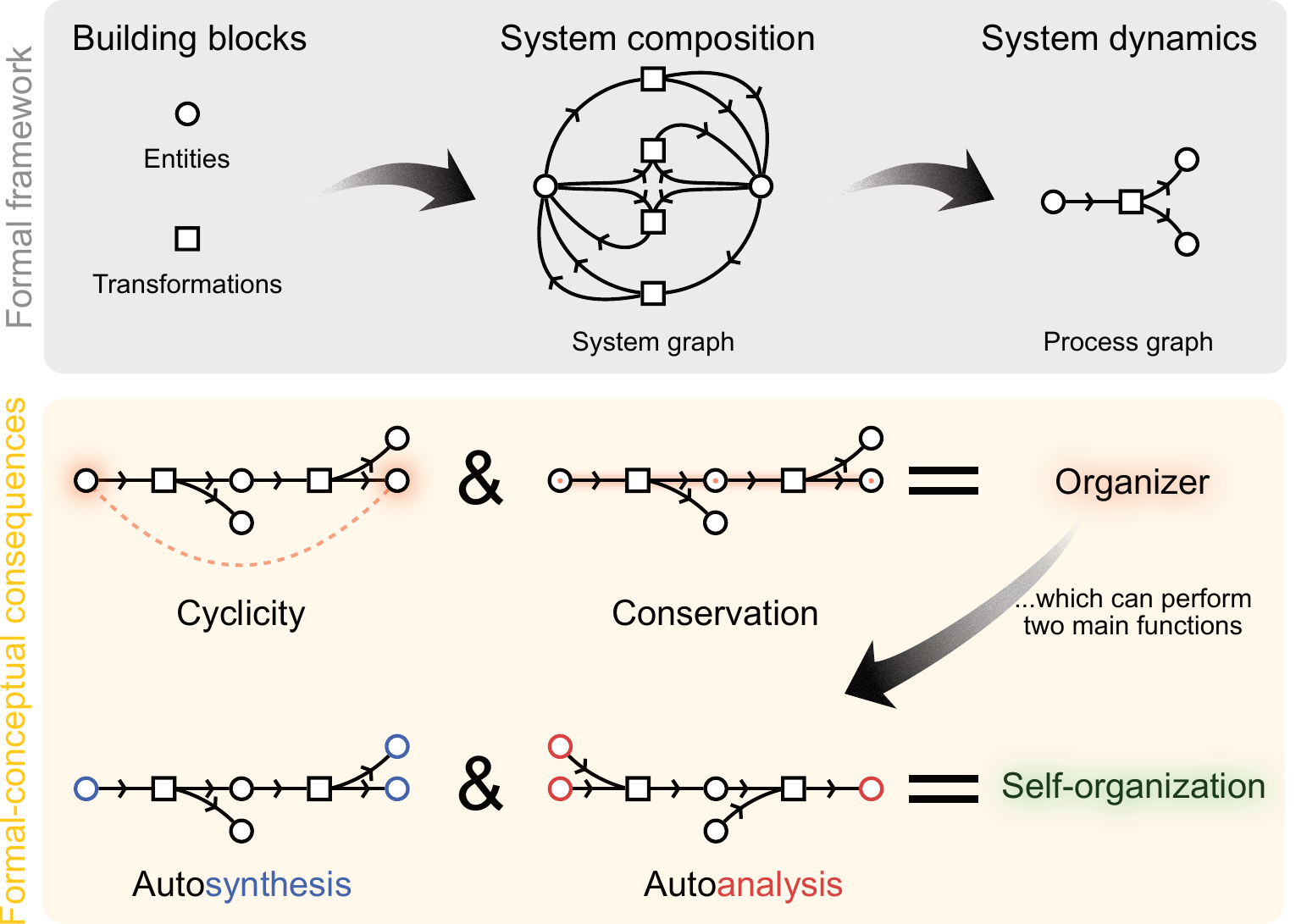}
    \label{fig:graphical_abstract}
\end{figure}

\tableofcontents

\section{Organization in living systems}
Living systems exhibit complex organization, with many interacting components producing diverse organized structures and behaviors. Organisms continuously transform matter and energy when coupled to their relevant environments. The complexity of these internal-external interactions often makes it hard to grasp what it is that makes a specific system organized, particularly when attempting to formalize this notion. Traditional theoretical approaches are often based on physical and thermodynamical constraints, frequently requiring system-specific details and assumptions which lose generality when applied to a large class of systems, like the class of the living. Here we take an alternative approach, with minimal physical assumptions or constraints, focusing instead on the features that make organisms \textit{organized}, which to us is the foundation of the living state. 

We develop a formalism that allows us to express what is \emph{possible} for a system, but we acknowledge that physical details and the context of particular systems will influence what is \textit{probable}. We aim to formally define system-level properties that living systems exhibit across the tree of life, which should also apply to a more general, non-Earth-bound, theory of the living. Our starting point is that despite the endless forms and behaviors exhibited by organisms in the biosphere, they are all organized in some equivalent sense. Thus, it is important to develop a formally grounded and conceptually coherent blueprint of this equivalence. 

\subsection{Approaches to the study of living organization}
The study of biological organization has a long history. However, despite this tradition, a formal approach has largely remained peripheral. Some efforts, related but not primarily focused on organisms, emerged in the context of chemical systems, like the work by Ostwald \cite{peng2022wilhelm}, Rössler \cite{rossler1971systemtheoretisches}, Gánti \cite{ganti1975organization}, Turing \cite{turing1990chemical}, chemical organization theory and autocatalytic systems theory \cite{dittrich2007chemical,heylighen2015chemical, kauffman1986autocatalytic, xavier2022small, hordijk2022autocatalytic}.

More general formalisms with a focus on organisms and organization have been scant, but we can mention organismic set theory by Rashevsky \cite{rashevsky1954topology,rashevsky1955some, rashevsky1955sometheorems}, the theory of autopoiesis by Maturana-Varela \cite{maturana1975organization, maturana1991autopoiesis}, Ehresmann-Vanbremeersch's theory of hierarchical evolutive systems \cite{ehresmann1987hierarchical}, and (M,R)-systems theory by Rosen \cite{rosen1958relational, rosen1958representation}. Similar in spirit but distinct in their details are the biotonic laws by Elsasser, catastrophe theory by René Thom, and Waddington's landscapes. To us, the work of Rashevsky-Rosen and Maturana-Varela form a particularly important core of ideas that strongly resemble our own motivations. An important distinction from this core, is that we consider \textit{consumptive} processes to be as important as \textit{productive} ones, to maintain self-organization. Most of the literature above ignores, or makes implicit, the consumptive aspects, which we will call \textit{autoanalysis}. Recently, a handful of theories describing particular aspects of living systems have been proposed, like the free energy principle \cite{karl2012free, kirchhoff2018markov}, assembly theory \cite{marshall2021identifying, sharma2023assembly}, constructor theory \cite{deutsch2013constructor, marletto2015constructor}, and others (e.g. \cite{england2013statistical, martyushev2006maximum}), but these are less relevant to our purposes.

Despite the variety of the aforementioned approaches, we agree with Mossio when he points out that there is still a ``blind spot for organization" in biology \cite{mossio2023introduction}. In our view, the blind spot lies in the fact that some abstract theories are hard to map to concrete systems, while others take the opposite stance by describing life as the mere ``jigglings and wigglings of atoms". We take neither of these approaches. To us, life is a phenomenon that \textit{lives} in the material world but whose defining properties are not tied to concrete material substrates. While this ontology of the living departs from mainstream theoretical biology, in our view it is not in conflict with it, but in complementarity.

\subsection{Lost and found: organisms as objects of study in biology}\label{sec:livingsystems}
\label{sec:sec1}
For much of modern science as it arose, organisms were the central objects of study in biology. However, during the last century, particularly after the emergence of the modern synthesis and the development of molecular biology, the organism took a progressively minor role. Over time, organisms became simple `vehicles' for genes and traits, which replaced them as the essential units of analysis and explanation. The modern synthesis and molecular biology are supra-organismal and infra-organismal disciplines, and thus a big part of the contemporary edifice of biology was built by excluding organisms. Moreover, genes were progressively endowed with qualities that used to be exclusive of organisms. Genes became the roots for any biological observable (phenotype), the causes for adaptive advantages/disadvantages, and they even gained agency and autonomy (e.g. in the `selfish gene' metaphor). Although there are some signs that the organism is returning to biology as a central object of study, much work remains to be done given the historical halt.

One of the first studies of what an organism is comes from Kant, who developed the idea of self-organized beings as those systems in which the constituent parts exist \textit{for} and \textit{by} the system \cite{kant1987critique}. This dual aspect of the components of an organism is important conceptually and formally, because it requires the elements of organisms to play at once both \textit{functional} and \textit{structural} roles. We will discuss this duality throughout our paper, but here we can mention that modern biology largely treats this duality as a hierarchy instead: structure \textit{implies} function. We believe that removing this hierarchy and recovering Kant's duality is necessary to understand why and how living systems are organized, and a primary aim of our work is to formalize these philosophical notions. Kauffman recently borrowed Kant's definition in his own work (calling such systems `Kantian wholes'), to build on a list of essential properties that living systems exhibit \cite{kauffman2024emergence, kauffman2013life, kauffman2014prolegomenon}. Although Kauffman has come close to a conceptual answer to what makes living systems alive, we believe that more work is needed to provide a formal grounding to some of these concepts.

The essence of our method is to define organic, self-organized systems in terms of the properties that \textit{emerge} from assembling simple processes that on their own are not self-organized. That is, instead of prestating the list of features we think organisms ought to have, we want to observe such features arise from these simpler processes. It is known that the kinds of molecules, genes, tissues, and other components vary widely across and within species, and even temporally within the same organism. To us, this strongly suggests that whatever life is, it will unlikely be found in a specific component, and more likely is a product of the \textit{relations} these components exhibit inside and outside the organism. Moreover, this also suggests that the question of what organisms are is primarily a theoretical question, given that no single \textit{real} organism can represent life as a phenomenon, just as no single molecule can encompass chemistry. This was the essential thrust of the research program developed by Rashevsky-Rosen, which was termed `relational biology' around the 1950s, and forms an essential pillar of our own research.

Our formalization follows a minimal set up, focused on how simple entities interact and change. Despite its simplicity, we observe emergent properties that we can map to essential features of living systems, in particular autonomy and self-organization. One important aspect that emerges, often overlooked in other formalisms, is that of the deliberate breaking down of components in the system, as opposed to simple entropic degradation or decay. This self-breakdown is balanced against the often-discussed self-building of organisms. We primarily leverage the theory of interacting networks \cite{meseguer1990petri, baez2018quantum, baez2021categories}, which has been successfully utilized to model diverse systems in engineering and the sciences \cite{murata1989petri, baez2019network}, but we also take inspiration from theories of chemical organization and autocatalysis \cite{dittrich2007chemical,barenholz2017design,blokhuis2020universal,deshpande2014autocatalysis}. As we introduce our formal tools, we will focus our discussion on the biological intuitions behind these, providing essential mathematical details whenever needed.

\section{A formal approach to the study of organisms}
We are interested in living systems where we can identify discrete entities and their interactions through whatever empirical or theoretical method at our disposal. Depending on the level of analysis and/or observation, these entities could be molecules, cells, tissues, multicellular organisms, and others. This means that, in our theory, there is no preferential level of description of a living system, so a `proper' description is that which suits our purposes of study. Therefore, we will begin with a minimal concept of entities, states, and transformations, and build on that throughout the paper to see how organization emerges.

\subsection{System entities and states}
Let us consider a finite set $E$ of \emph{entities}. Each entity lacks any further structure other than its kind. Later, we will consider subunits within entities, to show conservation properties in systems. For now, let us equip every entity in $E$ with a finite number of copies, or counts, and define the set of all entity counts as an \textit{ensemble}. In this way, we can think about the ensemble as the representation of the \textit{state} of the system at some given time. Thus, hereafter we will use ensemble and state as synonyms. Formally, each ensemble $x$ is a function $x:E\to\mathbb{N}$, mapping each entity to a natural number. We denote as $\mathbb{N}[E]$ the set of all possible states a system can have, where every ensemble $x$ in the set satisfies $\sum_{a\in E}x(a)<\infty$\footnote{This implies that only finite ensembles are states of a system, satisfied because $E$ is also finite. A distinction between state and ensemble is only relevant when $E$ is infinite, as it happens in combinatorial systems. For a more technical treatment of the relevant combinatorial system see \cite{ortiz2022combinatorics, ortiz2025combinatorial} by one of the authors.}. 

If $x$ and $y$ are ensembles, we define their sum, denoted as $x+y$, simply as the ensemble obtained by adding up the respective counts in $x$ and $y$. This means that for an entity $a\in E$, its count within $x+y$ is simply $x(a)+y(a)$, or $(x+y)(a)=x(a)+y(a)$. We denote the ensemble with zero counts simply as $\emptyset$. For ensembles $x$ and $y$, we say that $x$ is \emph{less than or equal} than $y$, denoted as $x\leq y$, if the counts in $x$ for each entity are less than those of $y$. Hence, $x\leq y$ whenever $x(a)\leq y(a)$, for each entity $a$. Similarly, $x\geq y$ whenever $x(a)\geq y(a)$ for each entity $a$. We write $x\ll y$ whenever $x(a)<y(a)$ for every entity $a$, and $x\gg y$ whenever $x(a)>y(a)$ for every entity $a$. Later, we will emphasize system properties that rely on this subtle distinction between $x\leq y$ and $<$.

\subsection{Elemental transformations}
The description of a system's state and composition is not enough to talk about organization. More important than this is to describe how these components interact and change to produce system-level features. A simple way to model change and interaction in our formalism is to allow transformations between different kinds of entities. Let us term these changes \textit{elemental transformations}, or transformations for simplicity. Let us denote this set of transformations with $T$. We can interpret transformations as changes in the properties, composition, or behavior of a given entity, so it becomes another kind of entity. This definition admits many kinds of transformations, whether they involve change in material composition or non-material, functional changes. For instance, a conformational change in an enzyme does not imply change in molecular composition but can involve functional change, whereas the breakdown of $ATP$ into $ADP + P_i$ does involve material change. Because this formalism admits representations at any level of biological organization, an entity can be a whole organism, which is useful if we wanted to model organism-level dynamics; for instance, to model when a chrysalis becomes a butterfly. As with any other modeling approach, it is our observations about a natural system of interest that inform the specification of our formalism.

\begin{definition}
    A \textbf{system} is a tuple $(E,T, \mathtt{in}, \mathtt{out})$ that consists of a pair of sets $E$ and $T$, as well as two functions $\mathtt{in}:T\to\mathbb{N}[E]$ and $\mathtt{out}:T\to\mathbb{N}[E]$ of inputs and outputs.
\end{definition}

Let $(E,T,\mathtt{in},\mathtt{out})$ be a system. Such system can be in a particular state (finite ensemble) at the beginning and can continue to change given the possible transformations in $T$. The inputs and outputs of transformations are dictated by the functions $\mathtt{in}$ and $\mathtt{out}$, which map input ensembles to output ensembles. For a transformation $\alpha$ where $\mathtt{in}(\alpha) = x$ and $\mathtt{out}(\alpha) = y$ we can equivalently write $x\xrightarrow{\alpha}y$. Let us here discuss a minimal example, to build up a more general intuition later. Consider a set of entities $E = \{a,b,c\}$ along with a set $T=\{\alpha,\beta\}$ of transformations, with inputs and outputs described as follows:

\[
a+b\xrightarrow{\alpha}c,\quad b+c\xrightarrow{\beta}a
\]

Now, let us suppose that the initial ensemble/state of the system is $3a+2b$. Then, we can see that in the next time step the only possible transformation would be $\alpha$, because the system does not have enough inputs for the $\beta$ transformation (it lacks $c$). Note that we use `time step' to denote the unfolding from inputs to outputs. A representation that naturally emerges when studying systems like this example is a directed graph, which features two types of nodes: one type represents the entities, while the other represents transformations. This means that every entity in \(E\) and each transformation in \(T\) has a corresponding node. There are directed edges from an entity \(a\) to a transformation \(\alpha\) in the quantity determined by \(\mathtt{in}(\alpha, a)\). Similarly, for the function \(\mathtt{out}(\alpha, a)\), we add the corresponding edges from \(\alpha\) to \(a\). Depending on the system of study, more than one edge can enter and leave a given pair of nodes, based on the counts of the functions $\mathtt{in}$ and $\mathtt{out}$, in which case we obtain a \emph{multigraph}. We refer to the structure that contains all of the properties described above as a \emph{system graph}. In essence, a system graph contains all the information needed to understand the behavior of the system it represents. We show an example in Figure~\ref{fig:systemgraphVector}.

\begin{figure}[ht]
    \centering
    \includegraphics[width=0.75\textwidth]{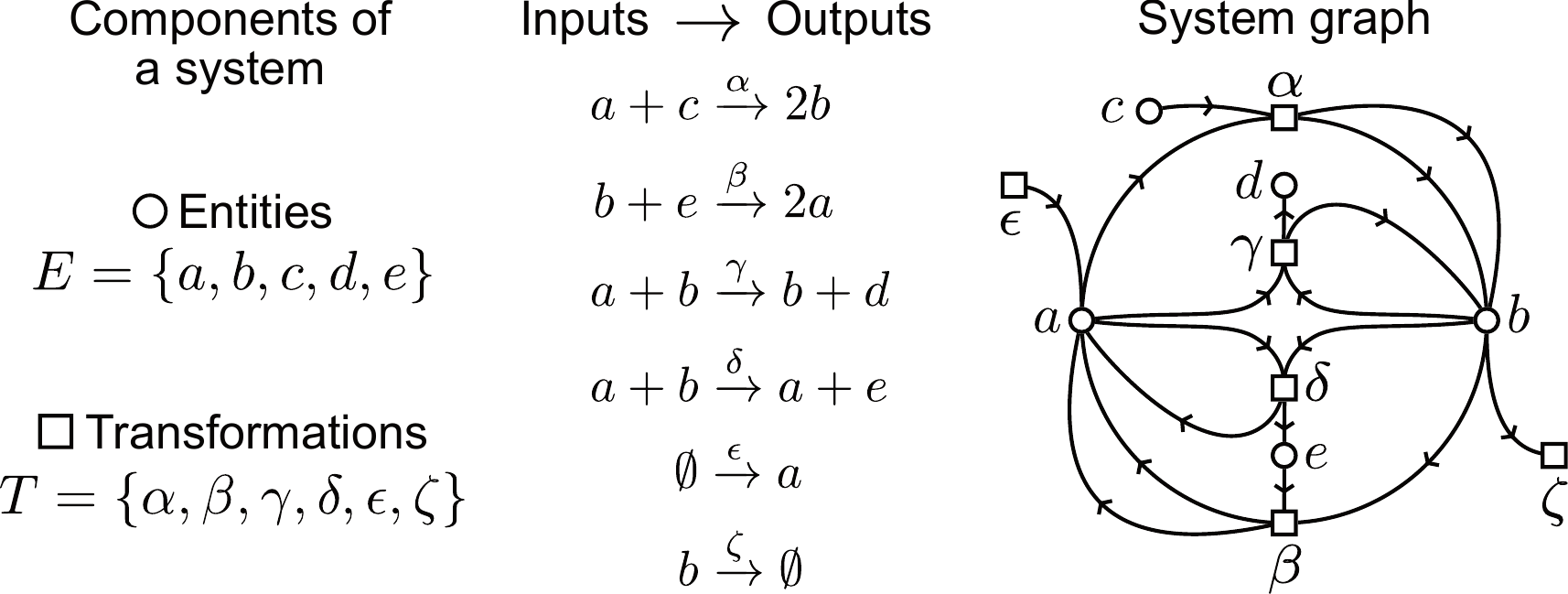}
    \caption{We can represent how entities ($E$) and transformations ($T$) interact via a system graph, shown above. Note that nodes are of two kinds, entities or transformations, and that multiple edges can enter and leave a given pair of nodes. Formally speaking this is a directed bipartite multigraph.}
    \label{fig:systemgraphVector}
\end{figure}

\section{General processes and temporal unfolding}
As we said before, system-level behaviors are more interesting to us than system components, as they reveal the dynamic aspects that underlie biological organization. Our starting point was to represent the structure of a system through its system graph. We might think of this structure as a blueprint of what a system can do, but in general, this does not say anything about what the system will actually do. This is a subtle point: even though the elemental transformations in $T$ already represent a minimal dynamical unit, from their inputs to their outputs, what we really want to describe is the general dynamical behavior of the \textit{whole system}. The dichotomy between the possible and the actual behaviors of a system, which in our formalism map to the system graph and the trajectory of system states, respectively, is strongly related to the genotype-phenotype dichotomy, but a detailed discussion of this will be left for future work.

\subsection{System dynamics}
In general, to observe the `temporal unfolding' of the system graph, we need to allow all possible interactions between the elemental transformations in $T$. These temporal interactions generate trajectories that can be thought of as the entire collection of dynamical behaviors that the system can exhibit. Importantly, we do not include a \textit{metric} notion of physical time, but a \textit{processual} one. That is, we do not define a specific timescale for any of the transformations in $T$, or for the interactions between these. This keeps the framework general, as to focus on which processes occur rather than focusing on how fast they occur.

The conceptual backdrop of focusing on processes is based on \textit{process biology} \cite{nicholson2018everything}. In this process ontology, the prime focus is on the `flow' of system states (phenotype/process) over time, instead of focusing on their persistent/static components (genotype/substance). Therefore, instead of giving structural details about the entities in $E$, we will focus on the changes these entities undergo within a system. A full treatment of the process ontology of the living is well beyond the aims of this paper, but ample literature exists on the topic (e.g. \cite{sep-process-philosophy, hertz2020nouns, dupre2014process,kearney2023process}).

\subsection{General processes as process graphs}
We define processes in terms of bipartite directed graphs. These \textit{process graphs} look similar to system graphs, but there are important differences. A process graph will have a `horizontal' layout, where we read from left to right the sequence of transformations that a system is undergoing. Importantly, there can be multiple copies of each node, both transformation and entity nodes, because over time a system can produce/apply entities/transformations of the same type. Process graphs are \textit{labeled} bipartite graphs, labeled by the elements of $E$ and $T$. In the rest of the text, we'll often use `process' and `process graph' interchangeably. The number of nodes with a given label represents the number of copies or instances of the corresponding entity or transformation in the process graph. Lastly, for a graph to qualify as a process graph it needs to satisfy a couple of topological conditions, making it clearly distinct from a system graph previously described:
\begin{enumerate}
    \item It should not possess any directed cycles.
    \item Each entity node can only have one outgoing and one incoming edge.
\end{enumerate}
These conditions are not artificially imposed, but arise naturally whenever a natural system unfolds in time. The first condition prevents `time loops', and the second ensures that the creation/destruction of entities is not acausal (i.e., every component is consumed/produced by a valid transformation). What is prevented by the first topological condition is that the \textit{literal} same entity appears at two or more distinct time steps, \textit{simultaneously}. In a system graph, loops are allowed because there is no notion of physical time, so there is no `ordering' of events: everything is shown `at once'. Note that the first condition does not prevent \textit{kinds} of entities to appear iteratively through time sequences in process graphs. In fact, a general notion of cyclicity will be fundamental to our definitions of self-organization, and relies on processes that produce and re-produce entities of the same kind over time. Re-creation is an essential feature of living systems: even though their \textit{literal} material components turn over constantly, many \textit{kinds} of entities/components re-appear over and over in `organismal time'.

\subsection{Process graph composition}
Any terminal node in a process graph can be extended by matching and merging new nodes that preserve the topological validity we just described. This is equivalent to saying that, as long as there are valid ways to add new nodes/edges, the process graph can keep unfolding in time. In the definition below we use the term \emph{matching} when referring to any graph obtained by performing such an extension. Thus, we say that a match in a graph is \emph{valid} if the resulting matching (graph) is also valid. 

\begin{definition}\label{def:process}
    A \textbf{process} is a valid matching of a set of disconnected copies of elementary process graphs. 
\end{definition}

Let us now describe more precisely how to build general process graphs from individual transformations or from pre-existing process graphs. First, we note that each elementary process $\alpha\in T$ is itself a graph that includes the transformation node $\alpha$ and its corresponding input and output nodes, given by $\mathtt{in}(\alpha,a)$ and $\mathtt{out}(\alpha,a)$. Then, it is easily noted that these elemental processes can be combined in many ways, as long as they respect the topological validity we described above. For completeness, we also include single entity nodes (with no outgoing edges), which we term null processes, to serve as elemental steps where `nothing happens'. Then, given a process $p$, we can refer to the ensembles of entity nodes that do not have an incoming edge or an outgoing edge as \emph{inputs} and \emph{outputs} of the process, respectively. Hence, similar to an elementary transformation, any process $p$ has input and output nodes, but these correspond to the states of the \textit{entire system}, given by the input $x$ ensemble and output $y$ ensemble. We can succinctly denote such a process as $x\xrightarrow{p}y$. If we have some other process $x'\xrightarrow{q}y'$, we refer to the disconnected graph of $p$ and $q$ as their \emph{parallel composition}, denoted by $p+q$. We can also build a \emph{sequential composition} of processes $x\xrightarrow{p}y$ and $y\xrightarrow{q}z$, simply because the output of the first is exactly the input of the second. It is important to note here that there is not one but potentially many sequential compositions of $p$ and $q$, given that there may be more than one valid matching, arising from the permutations of $y$. Hence, in general, the notation $pq$ for sequential composition is not well-behaved. Thus, in this work, we will be working with general matchings that obey definition~\ref{def:process}, which include cases where the outputs of a process coincide only \textit{partially} with the inputs of the next process\footnote{For a more formal treatment of the kinds of parallel and sequential compositions see \cite{baez2021categories}.}. Biologically speaking, this feature, of having multiple potential valid $pq$ compositions is at the center of self-regulation and historical contingency, since biological processes are not deterministic and the kinds of behaviors an organism exhibits are not `uniquely' defined. In essence, the fact that composition is not well-behaved gives us, for free, a mechanism to create dynamic, non-deterministic, behavioral diversity. We show all composition mechanisms in Figure~\ref{fig:processes}.

\begin{figure}[ht]\label{fig:processes}
    \centering
    \includegraphics[width=1\textwidth]{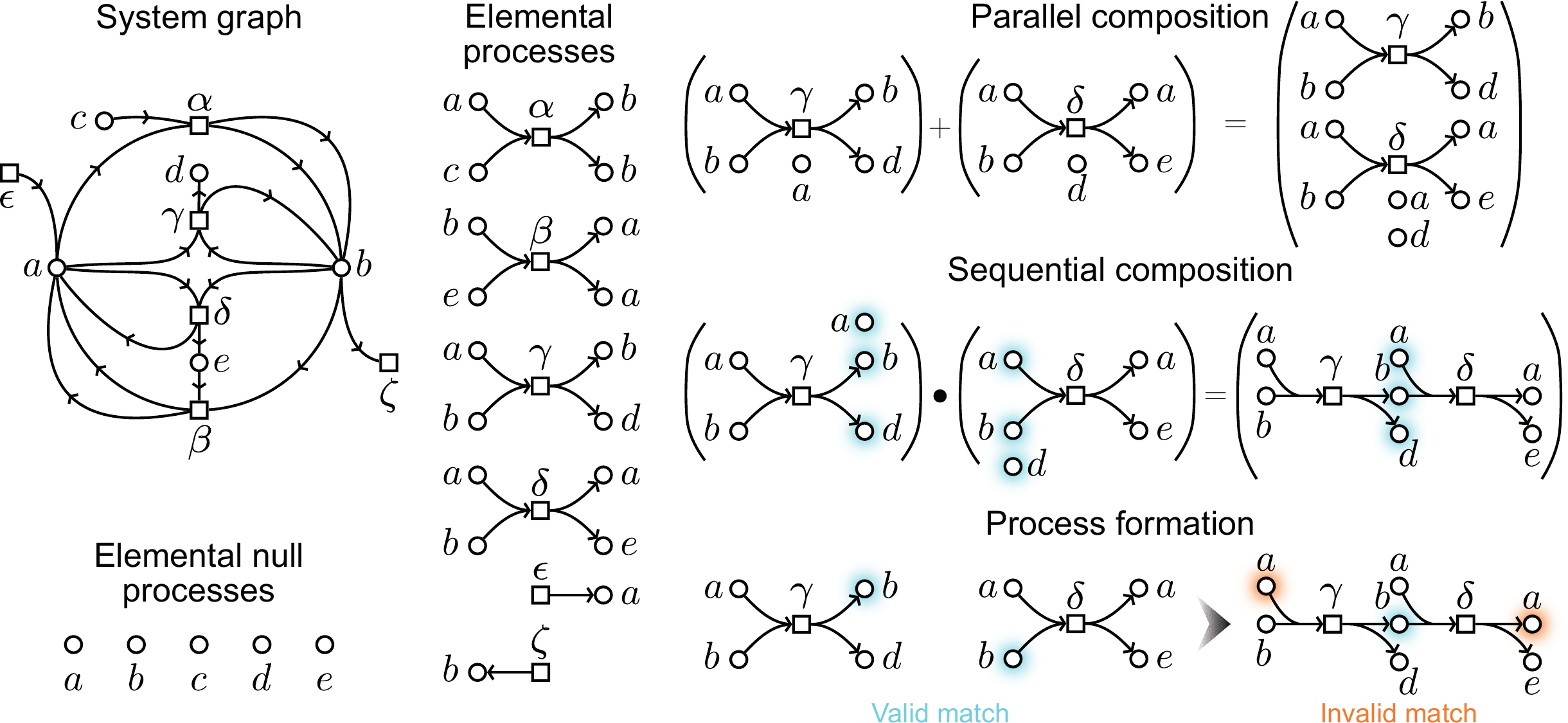}
    \caption{A system graph (left) is the composite of a collection of elemental processes, which include null processes. Elemental processes can be composed in parallel or sequentially, preserving the topological validity described in the text. Note that a process graph has multiple copies for some nodes, and no directed loops or multiple edges, while the system graph contains unique nodes, directed loops and multiple edges.}
\end{figure}

Matching elementary processes as we described allows us to `unfold' the system graph, which is inherently static, into a dynamic process graph that can grow as long as any valid matching is possible. This dynamic unfolding of processes impose a pattern of connectivity between all of the states that a system can exhibit. The resulting topology of states can form the basis of a \textit{process pathscape}, describing the collection of behavioral paths that a system can explore from a given state, akin to a phase portrait in dynamical systems theory. The important difference is that there are multiple possible trajectories taken from the same starting point, and there is no principled way to know which one will actually be taken. Another important point is that the issue of connectivity within a single process represents the causality and dependence between its constituting sub-processes. This will be very useful when building up notions of autonomy and organizational closure. Let us now introduce some useful notions from linear algebra, which will help us state some of our formal results more precisely and succinctly, though we will map our intuitions back to the system and process graphs we just described.

\subsection{System matrices and subsystems}

In a small system graph, it is often easy to detect cycles by inspection of its graphical representation. However, in arbitrarily complex systems like biological and ecological ones, visual inspection is no longer possible or effective, and other procedures become necessary to determine system-level properties. For this reason, here we introduce some basic notions of linear algebra that will aid our analysis. This will also allow us to talk about subsystems in a simple way, which will be important in order to emphasize what part of a system has a certain property that might not be exhibited by the whole system. Moreover, the use of matrix notation simplifies the discussion of key concepts such as conserved quantities and cycles, which will be central to our formal definitions. We first note that any system in the way we have defined them can be equivalently represented by a pair of matrices, without loss of information.

Let us consider a system $(E,T,\mathtt{in},\mathtt{out})$ as defined in Section~\ref{sec:livingsystems}. From this, we construct a pair of \emph{system matrices}, $\mathcal{S}=(\mathcal{S}_0,\mathcal{S}_1)$, corresponding to the input and output functions, respectively:
\[
\mathcal{S}_0^{ij}=\mathtt{in}(i,j),\qquad\mathcal{S}_1^{ij}=\mathtt{out}(i,j),\qquad \text{for } i\in T, j\in E.
\]
The information contained within $\mathcal{S}$ is sufficient to fully specify the system, hence we will often use $\mathcal{S}$ as a proxy for discussing the system itself. Because $\mathcal{S}$ is the matrix representation of a system, subsystems can be easily defined as matrices that are component-wise smaller than the system matrix. Formally, a subsystem $\mathcal{A}$ of $\mathcal{S}$ is denoted by $\mathcal{A}\leq\mathcal{S}$, where for each transformation $i$ and entity $j$ we have
\[
\mathcal{A}_0^{ij}\leq\mathcal{S}_0^{ij},\text{ and }\mathcal{A}_1^{ij}\leq\mathcal{S}_1^{ij}.
\]
When we use $\mathcal{A}<\mathcal{S}$ we indicate that $\mathcal{A}$ is a \textit{proper} subsystem of $\mathcal{S}$, meaning that it is strictly smaller than $\mathcal{S}$ in at least one transformation/entity entry. We then can represent states as vectors rather than functions, so for an ensemble $x\in\mathbb{N}[E]$, we write $x_i$, instead of $x(i)$, for the entry corresponding to entity $i\in E$. Apart from the symbolic change, this vector notation enables us to encode more general entities than discrete ensembles, such as vectors with negative integers or real numbers, and vectors over the set of transformations, which can be useful in many different contexts, although in this work we will keep working with discrete entities. 

\begin{figure}
    \centering
    \includegraphics[width=0.8\textwidth]{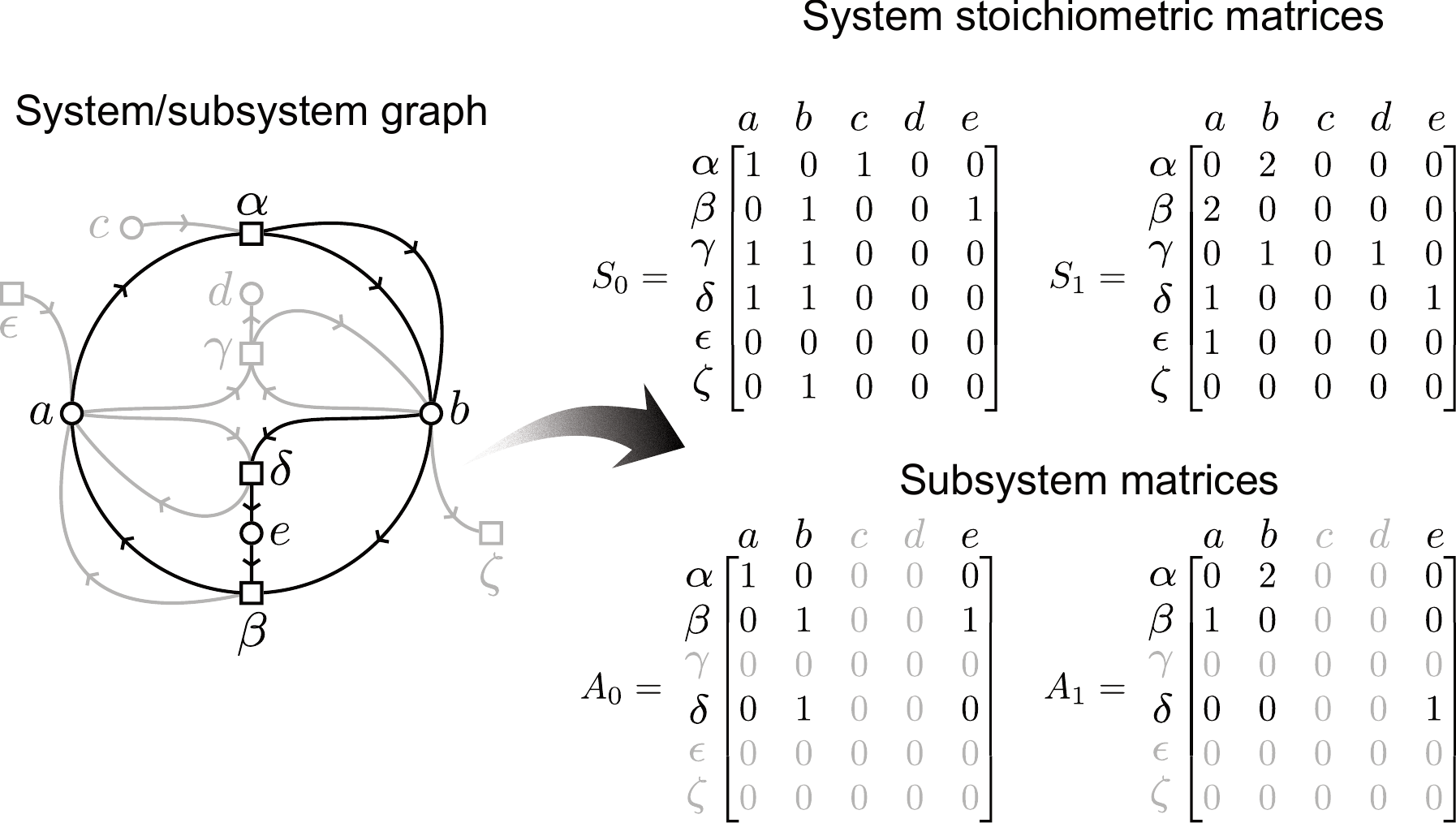}
    \caption{A system graph can be partitioned into subsystems (left), and this information can be equivalently represented as a pair of matrices (right), as explained in the text. Black represents the subsystem in the graph and subsystem matrices, and gray the components outside the subsystem.}
    \label{fig:matrixes}
\end{figure}

One of the main advantages of using matrix and vector representations is the possibility of using matrix multiplication. Let us recall that for an arbitrary matrix $\mathcal{M}$ of dimension $m \times n$ and a column vector $x$ of dimension $n$, the multiplication $\mathcal{M}x$ results in a vector of dimension $m$:
\[
(\mathcal{M}x)_j=\sum_{i}\mathcal{M}^{ij}x_i.
\]
For a vector $q$ representing the quantities/counts associated with each entity in the system, the vector $(\mathcal{S}_1-\mathcal{S}_0)q$ indicates the net change of $q$ through each transformation. Similarly, if $u$ is a vector over transformations, the multiplication $(\mathcal{S}_{1}^\top -\mathcal{S}_{0}^\top )u$ yields a vector detailing the net change in each entity when such transformations are applied. Here, the superscript $^\top $ denotes a transpose of a matrix, switching the matrix's rows and columns. For simplicity, hereafter we will denote the difference between the system matrices as $\mathcal{S} = \mathcal{S}_1 - \mathcal{S}_0$. Note that this is the same symbol as the matrix representation $(\mathcal{S}_0,\mathcal{S}_1)$, which is not the same as $\mathcal{S}_1 - \mathcal{S}_0$, but the specific use should be clear within a context.

Given a process $p$, we will often need to speak about the \emph{transformation vector} associated with such a process. This is a vector $u\in\mathbb{N}^T $ for which each entry describes the number of times the corresponding transformation appears within the process. A process always has a unique transformation vector. The converse, however is not true. In most cases, given a transformation vector $u$, there will be multiple processes yielding the same transformation vector, for similar reasons that we explained when noticing how a system graph can generate many different process graphs.

\section{Two formal properties leading to organization: conservation and cyclicity}

In this section, we introduce the concept of conservation and cyclicity within a system, which are intimately related. To show this relationship, let us discuss the notion of a catalyst, which we can think of as an entity (or entities) that are preserved across a process, where the other entities in the process have changed. This is an emergent property resulting from the composition of processes, which in some situations can form \emph{catalytic cycles}. It might seem intuitive to define catalytic cycles as some process $x\xrightarrow{p}y$ for which $x$ and $y$ share at least one entity, the catalyst. However, we would mistakenly classify the process consisting of the parallel composition of $x\xrightarrow{p}y$ with $z\xrightarrow{q}x$ as catalytic. However, the ensemble $x$, seemingly conserved in $p+q$, is actually consumed in $p$ and subsequently regenerated in $q$, which was clearly not preserved in the process, as catalysts are. Therefore, we need a more nuanced notion of catalysts. Below we provide a definition in terms of system and process graphs, but later we will recast our formulations in the terms of matrices.

\begin{definition}
A \textbf{catalytic cycle} is a directed cycle in the system graph. A \textbf{catalytic process} is a process for which a path exists from an input to an output entity with the same label.
\end{definition}

Let us note that a single catalytic cycle, in the system graph, can give place to many different catalytic processes. Conversely, a catalytic process can embody multiple catalytic cycles. This is because of the multiple permutations that a valid matching can have, as explained before. Regardless of this, a true catalyst should have some conservation property preventing us from considering materially disjoint entities as catalysts, as discussed previously. In other words, we should be able to trace the identity of a catalyst throughout a connected path in some process for it to be considered a true catalyst, as shown in Figure~\ref{fig:catalysis}. Despite the terminology which we borrowed from chemistry, let us note that we are not making any assumptions about the material substrates which embody any of the properties of our formalism, so these can be observed at any level of biological and physical organization, as we have discussed before. 

\begin{figure}
    \centering
    \includegraphics[width=0.6\textwidth]{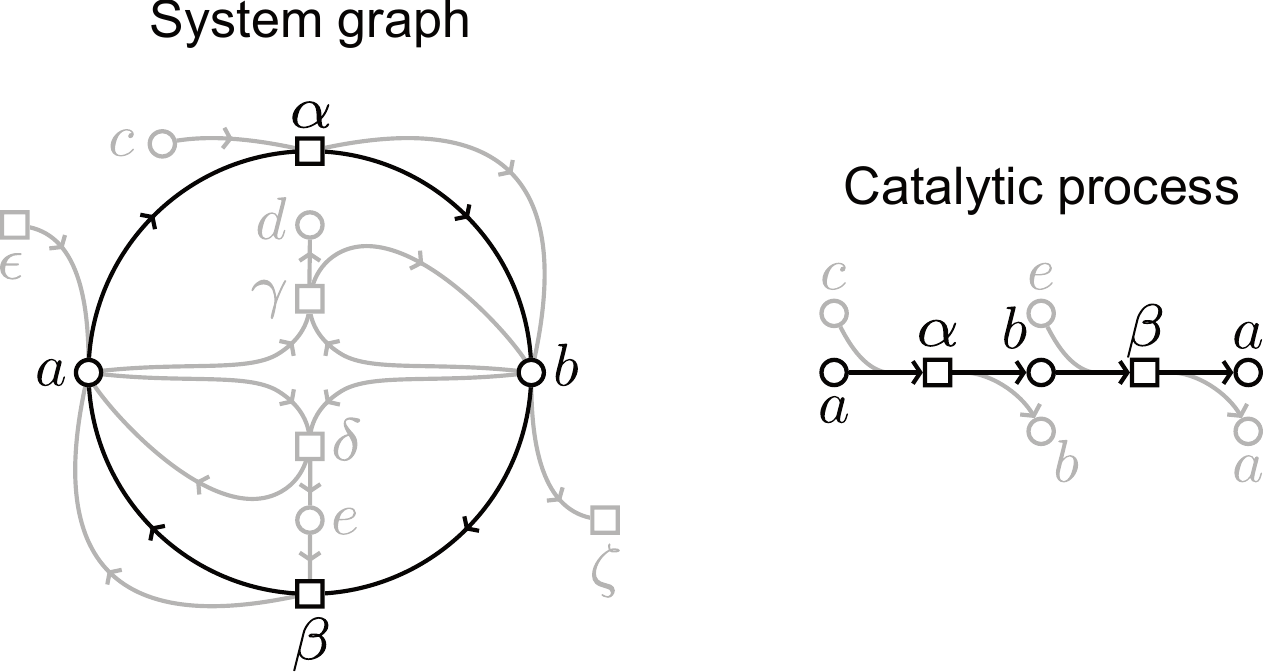}
    \caption{We show a catalytic cycle in the system graph (left, in black) along with a catalytic process graph (right), with node \textit{a} as the catalyst in this case. In this case the catalytic process is very similar to the catalytic cycle, but as we explain in the text, these are usually not one-to-one correspondences, given that a general process can take many alternative paths.}
    \label{fig:catalysis}
\end{figure}

By the definition~\ref{def:process}, the graph of a process is absent of directed cycles. Notice that a pair of input and output entities with the same label is invalid as it results in a directed cycle. Therefore, a process has a catalytic cycle whenever an invalid match exists.

\subsection{Conservation}
In order to find a true catalytic process, we need to be able to trace the \textit{conservation} of a catalyst throughout the process. Here, we do this by tracing \textit{subunits} within each entity in a process, as to trace what is preserved when each transformation `unfolds'. This is akin to labeling the material constituents of each entity, so that we can follow these labels over time and identify which of these persist through a process. This method resembles what is done in biochemistry, for instance, where the order of events and mass conservation in a pathway are elucidated by radioactive labeling of specific atoms in the participant molecules.

Let us represent the subunit structure of every entity in a system as a vector $n\in \mathbb{N}^E$, which for each entity $j$ it gives the number $n_j$ of subunits contained within it. In essence, $n$ encodes the cardinalities of sets of subunits, which, as noted earlier, can be thought of as `tagged' indivisible building blocks of an entity, but that can become part of distinct entities through every process. Equipped with this notion, we can now say that $n$ is a \emph{conserved} quantity if $n>0$ and the following conservation equation is satisfied:
\[
\sum_{j}\mathcal{S}_0^{ij}n_j = \sum_{j}\mathcal{S}_1^{ij}n_j,\quad\text{for all }i\in T.
\]
It is important that the vector $n$ be non-zero, as the zero vector trivially satisfies the conservation equation. We may describe a conservative system with $n\gg 0$ as \emph{strongly conservative}. Typically, when dealing with entities of non-negligible mass that adhere to mass conservation principles, the vector that represents the mass of each entity qualifies the system as strongly conservative.

In open systems that interact with others and with the environment, a weaker notion of conservation is more relevant. This is because in these systems we often find `cores' exhibiting strong conservation, while the rest of the system is not necessarily conservative. To make this concrete, let $0<\mathcal{A}\leq\mathcal{S}$ be a non-zero subsystem. We will refer to entities that participate in $\mathcal{A}$, either as inputs or outputs, as \emph{internal}. Otherwise, an entity is \emph{external}. A vector $n$ which is zero for all external entities is similarly an \emph{internal vector}. If the entries of an internal vector $n$ are positive for all internal entities, we say the the vector is \emph{full}.

\begin{definition}
    A \textbf{semi-conservative} system is a system $\mathcal{S}$ with a non-zero subsystem $0 < \mathcal{A} \leq \mathcal{S}$, and a full internal vector $n$, such that $\mathcal{A}n = 0$.
\end{definition}

\begin{figure}
    \centering
    \includegraphics[width=0.8\textwidth]{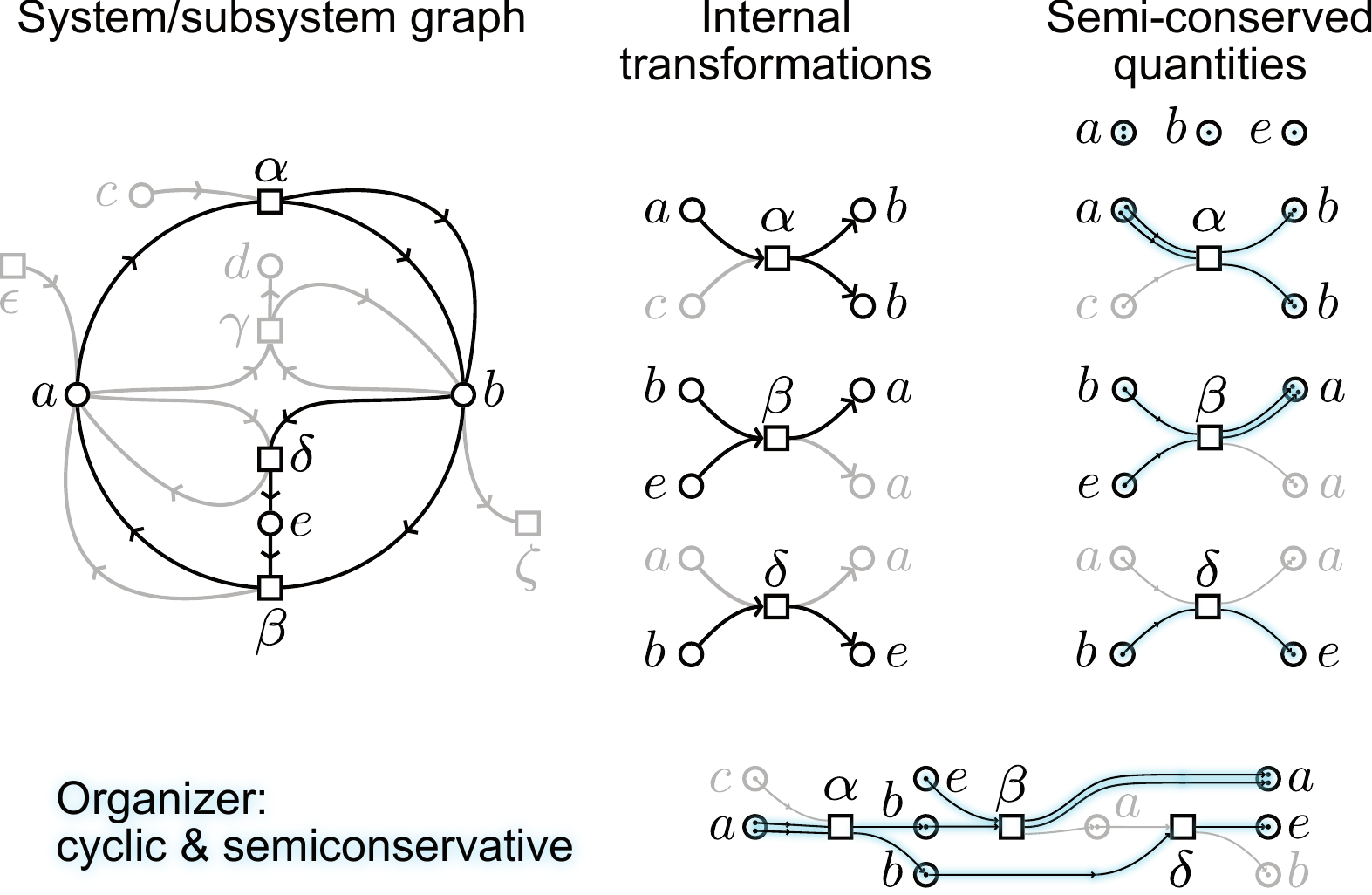}
    \caption{For a system/subsystem graph (left), we can `trace' the changes in the internal transformations (center) by using a notion of \textit{subunits} (right) that keep track of what is being transformed. In the right diagrams, we represent each entity as a circle with internal dots representing its subunits. The blue lines show the trace of subunits through the transformations. We show an organizer (bottom) as a cyclic and semi-conservative process.}
    \label{fig:conservation}
\end{figure}

\begin{proposition}\label{thm:84293018}
    Let $(\mathcal{S},\mathcal{A},n)$ be a semi-conservative system. Then, a transformation has an input in $\mathcal{A}$ if and only if it has an output in $\mathcal{A}$.
\end{proposition}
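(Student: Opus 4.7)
The plan is to unpack both sides of the biconditional using the conservation equation $\mathcal{A}n=0$, which (recalling the convention $\mathcal{A}=\mathcal{A}_1-\mathcal{A}_0$) states row-by-row that
\[
\sum_{j\in E}\mathcal{A}_0^{ij}n_j=\sum_{j\in E}\mathcal{A}_1^{ij}n_j\qquad\text{for every }i\in T.
\]
Since the argument is symmetric in $\mathcal{A}_0$ and $\mathcal{A}_1$, I would prove only the forward direction and point out that the reverse is identical.

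First, I would observe that only internal entities can contribute to either sum. By definition, an external entity $j$ does not participate in $\mathcal{A}$ at all, so $\mathcal{A}_0^{ij}=\mathcal{A}_1^{ij}=0$ for every $i$. Hence both sums above reduce to sums over internal $j$.

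Next, suppose transformation $i$ has an input in $\mathcal{A}$, i.e.\ there exists an entity $j^{\ast}$ with $\mathcal{A}_0^{ij^{\ast}}>0$. Such a $j^{\ast}$ is internal, and because $n$ is a full internal vector we have $n_{j^{\ast}}>0$. Since all remaining terms $\mathcal{A}_0^{ij}n_j$ are nonnegative (products of nonnegative matrix entries with nonnegative $n_j$, including $n_j=0$ for external entities), the left-hand side is strictly positive. The conservation equation forces the right-hand side to be strictly positive too, so there must be some $j^{\prime}$ with $\mathcal{A}_1^{ij^{\prime}}n_{j^{\prime}}>0$, which in particular gives $\mathcal{A}_1^{ij^{\prime}}>0$; thus $i$ has an output in $\mathcal{A}$.

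There is really no hard step here; the only thing that could go wrong is if an internal entity had $n_j=0$, allowing a positive input to contribute nothing to the weighted sum, or if external entities quietly contributed to $\mathcal{A}$. Both dangers are ruled out by hypothesis: fullness of $n$ on internal coordinates supplies the strict positivity needed to propagate from ``input present'' to ``sum positive,'' and the definition of internal vs.\ external ensures the sums restrict cleanly to internal indices. So the main conceptual point to highlight in the write-up is simply that \emph{fullness} of the internal vector is exactly what converts the scalar conservation identity into a structural statement about which transformations touch $\mathcal{A}$.
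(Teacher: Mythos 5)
Your proof is correct and follows essentially the same route as the paper's: both arguments use fullness of the internal vector $n$ to conclude that an input in $\mathcal{A}$ makes $(\mathcal{A}_0 n)_i$ strictly positive, then invoke $\mathcal{A}_0 n=\mathcal{A}_1 n$ to force a positive output term, with the converse by symmetry. Your version merely spells out the bookkeeping about external entities more explicitly.
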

\begin{proof}
    Consider a semi-conservative system as in the hypothesis of the proposition. Suppose some transformation $i$ has an input in $\mathcal{A}$, meaning that $\mathcal{A}_0^{ij}>0$ for some entity $j$. Since $n$ is a full internal vector, this means that $(\mathcal{A}_0n)_i>0$. Moreover, since we require that $\mathcal{A}_0n=\mathcal{A}_1n$, we must have that $(\mathcal{A}_1n)_i>0$ as well, and so $i$ must have some output in $\mathcal{A}$. The converse is analogous.
\end{proof}

This property is referred to as `autonomy' in \cite{blokhuis2020universal}, and in \cite{barenholz2017design} it is part of the definition of an autocatalytic cycle, but in our paper, it only takes us partially toward a full notion of autonomy. 

\subsection{Cyclicity}
Let us come back to cycles, which we briefly discussed before in terms of system and process graphs, this time in terms of system matrices. Let us refer to a process of the form $x\xrightarrow{p}x$, where the input is equal to the output, as a \emph{cyclical process}. In terms of the transformation vector $u$ associated with such a process, one observes that the net change in the quantity of each species is zero. Formally, we have that the following cycle equation is satisfied:
\[
\sum_{i}\mathcal{S}_0^{ij}u_i=\sum_{i}\mathcal{S}_1^{ij}u_i,\quad\text{for all }j\in E.
\]
We deem a system with one such vector $u$ as \emph{cyclical} and we refer to $u$ as a \emph{cycle vector}. As before, it is important to require that a cycle vector be properly greater than 0, as otherwise any system would be trivially cyclical. For reasons we explained before, we will be interested in scenarios where only a subsystem $\mathcal{A}$ satisfies the cycle equation. In analogy with semi-conservative systems, we will refer to transformations as \emph{internal} if they have either an input or an output in $\mathcal{A}$. A transformation vector $u$ will be \emph{internal} if it is zero for all external entities, and an internal vector will be \emph{full} when it is positive for all internal entities. 

\begin{definition}
    A \textbf{semi-cyclical} system is a system $\mathcal{S}$ with a subsystem $0<\mathcal{A}\leq\mathcal{S}$ and a full internal vector $u$ such that $\mathcal{A}^\top  u=0$.
\end{definition}

Thus, a proposition analogous to that of semi-conservative systems exists:

\begin{proposition}\label{thm:89741902}
    Let $(\mathcal{S},\mathcal{A},u)$ be a semi-cyclical system. Then, an entity is an input in $\mathcal{A}$ if and only if it is an output in $\mathcal{A}$.
\end{proposition}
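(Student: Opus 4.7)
The plan is to mirror the argument used for Proposition~\ref{thm:84293018}, exploiting the duality between the semi-conservative case (which uses an entity vector $n$ and the matrices $\mathcal{A}_0, \mathcal{A}_1$ acting on the right) and the semi-cyclical case (which uses a transformation vector $u$ and the transposed matrices acting on the left). Since $\mathcal{A}^\top u = 0$ is shorthand for $\mathcal{A}_0^\top u = \mathcal{A}_1^\top u$, the whole argument reduces to chasing positivity of entries through this equality.

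First I would fix an entity $j$ which is an input in $\mathcal{A}$, so that $\mathcal{A}_0^{ij}>0$ for some transformation $i$. By definition, having a nonzero input entry in $\mathcal{A}$ makes $i$ an internal transformation, and since $u$ is a full internal vector this forces $u_i>0$. I would then compute the $j$-th coordinate of $\mathcal{A}_0^\top u$, which is $\sum_k \mathcal{A}_0^{kj} u_k$; all summands are nonnegative, and the $k=i$ term is strictly positive, so $(\mathcal{A}_0^\top u)_j>0$. Using the semi-cyclical hypothesis $\mathcal{A}_0^\top u = \mathcal{A}_1^\top u$, I conclude $(\mathcal{A}_1^\top u)_j>0$, which by the same sum-of-nonnegatives reasoning forces $\mathcal{A}_1^{i'j}>0$ for some $i'$, i.e.\ $j$ is an output in $\mathcal{A}$.

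The reverse direction is completely symmetric: assume $\mathcal{A}_1^{ij}>0$ for some $i$, note that $i$ is therefore internal and $u_i>0$, deduce $(\mathcal{A}_1^\top u)_j>0$, and read off $(\mathcal{A}_0^\top u)_j>0$ from the same equation to conclude that $j$ is an input. Together these two directions yield the claimed equivalence.

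The only subtlety worth flagging is keeping the bookkeeping straight: in the semi-conservative proposition the ``full internal'' vector ranges over entities and the equation pairs inputs and outputs of a fixed transformation, whereas here the ``full internal'' vector ranges over transformations and the equation pairs inputs and outputs concerning a fixed entity. I do not expect any genuine obstacle beyond being careful that ``internal transformation'' is the correct notion (a transformation with at least one input or output in $\mathcal{A}$) so that the positivity of $u_i$ is guaranteed whenever we need it.
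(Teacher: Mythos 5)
Your proposal is correct and is precisely the argument the paper intends: the paper's proof of this proposition is literally ``analogous to Proposition~\ref{thm:84293018},'' and your write-up carries out exactly that dualization (swapping the entity vector $n$ for the transformation vector $u$ and transposing the matrices), including the right reading of ``full internal'' for transformation vectors. No gaps.
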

\begin{proof}
    The proof is analogous to that of Proposition~\ref{thm:84293018}.
\end{proof}

The property implied by this theorem is referred to as `semi-self-maintenance' in \cite{dittrich2007chemical} and it is also part of the definition of an autocatalytic cycle in \cite{barenholz2017design}, but we believe that in our approach these properties emerge from more fundamental processes, rather than being already part of a definition.

\section{Autonomy as the confluence of cyclicity and conservation}
\label{sec:autonomy}

We can now show how these system-level properties described above assemble into a behavior that we will name \textit{autonomy}, which we formalize through the notion of an \textit{organizer}. Thus, for us an \emph{autonomous} system is one that has the properties of an organizer.

\begin{definition}
    An \textbf{organizer} is a system $\mathcal{S}$ together with a non-zero subsystem $0<\mathcal{A}\leq\mathcal{S}$, a full internal entity vector $n$, and and a full internal transformation vector $u$, such that $\mathcal{A}n=0$ and $\mathcal{A}^\top u=0$.
\end{definition}

From this definition we can clearly see that organizers are both conservative and cyclical, but the converse is not necessarily true. It may be possible that a system $\mathcal{S}$ is both cyclical and conservative but via different subsystems. Thus, autonomy requires a system to be cyclical and conservative \textit{via the same subsystem}, that is, the organizer. In other words, autonomy is a stronger property than the mere conjunction of conservativeness and cyclycity. We will now use propositions~\ref{thm:84293018}~\&~\ref{thm:89741902} to prove a result that relates organizers to catalysis. This helps us building an intuition of why an organizer can be thought of as an autonomous system constituted by catalysts.

\begin{lemma}\label{thm:30925853}
    Each organizer contains at least one catalytic cycle.
\end{lemma}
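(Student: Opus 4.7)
The plan is to use the two propositions just proved to show that the subgraph of the organizer $\mathcal{A}$ has positive out-degree (equivalently, positive in-degree) at every internal node, and then invoke finiteness to extract a directed cycle, which is by definition a catalytic cycle.

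First, I would unpack what the organizer hypothesis yields at each node. Since $\mathcal{A}n = 0$ with $n$ a full internal entity vector, Proposition~\ref{thm:84293018} tells us that every transformation $i$ appearing in $\mathcal{A}$ (i.e., every internal transformation) has at least one input in $\mathcal{A}$ and at least one output in $\mathcal{A}$. Symmetrically, since $\mathcal{A}^\top u = 0$ with $u$ a full internal transformation vector, Proposition~\ref{thm:89741902} tells us that every internal entity $j$ is both the input of some internal transformation and the output of some internal transformation. Translating to the system graph, this means that in the bipartite subgraph determined by $\mathcal{A}$, every internal node (entity or transformation) has strictly positive out-degree (and strictly positive in-degree).

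Next, I would note that the set of internal nodes is non-empty: $\mathcal{A}$ is non-zero by hypothesis, so there is at least one pair $(i,j)$ with $\mathcal{A}_0^{ij} + \mathcal{A}_1^{ij} > 0$, giving at least one internal transformation and one internal entity. Starting from any such node $v_0$, repeatedly following an outgoing edge (which exists by the previous step, and whose endpoint is again internal because the edge lies in $\mathcal{A}$) produces an infinite walk $v_0, v_1, v_2, \ldots$ in the subgraph of $\mathcal{A}$. Since $E$ and $T$ are finite, the set of nodes is finite, so by pigeonhole some node must repeat: there exist $k < \ell$ with $v_k = v_\ell$, and the segment $v_k, v_{k+1}, \ldots, v_\ell$ is a directed cycle entirely contained in the subgraph of $\mathcal{A} \leq \mathcal{S}$, hence a directed cycle in the system graph of $\mathcal{S}$. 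By the definition given earlier, this is a catalytic cycle.

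I do not expect a serious obstacle here; the only subtle point is making sure the walk stays inside $\mathcal{A}$ rather than drifting into the rest of $\mathcal{S}$, but this is automatic because at each step we choose an edge with positive multiplicity in $\mathcal{A}$, which forces both endpoints to be internal. The hardest part is really bookkeeping: correctly applying Propositions~\ref{thm:84293018} and~\ref{thm:89741902} to both node types of the bipartite graph and ensuring that ``input/output in $\mathcal{A}$" translates precisely to ``outgoing/incoming edge in the subgraph of $\mathcal{A}$" under the conventions of Section~2.
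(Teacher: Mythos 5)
Your proposal is correct and follows essentially the same route as the paper: use Propositions~\ref{thm:84293018} and~\ref{thm:89741902} to guarantee that every internal node has a successor inside $\mathcal{A}$, then walk forward and invoke finiteness to force a repeated node and hence a directed (catalytic) cycle. Your version is slightly more explicit about non-emptiness of the internal node set and about the walk remaining inside $\mathcal{A}$, but the underlying argument is identical.
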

\begin{proof}
    Let $(\mathcal{S},\mathcal{A},n,u)$ be an organizer. Let $a$ be any internal entity. By Lemma~\ref{thm:89741902}, $a$ must be the input of some internal transformation. Then, by Lemma~\ref{thm:84293018}, that transformation must have an output that is an internal entity. We continue in this fashion constructing a sequence of connected internal entities and transformations. Since there are only a finite number of internal entities and transformations, this procedure can only proceed for a finite number of steps before arriving at an entity or a transformation that we have already used before, thus forming a catalytic cycle.
\end{proof}

However, if a subsystem contains a catalytic cycle, this does not automatically make it autonomous. For that, we would need every entity to act as a catalyst, and every transformation to be involved in some catalytic cycle, as we just explained. In fact, we can prove something even stronger, which again illustrates the intimate connection between organizers and catalysis.

\begin{theorem}\label{thm:39572809}
    Let $(\mathcal{S},\mathcal{A},n,u)$ be an organizer. Then, there exists a process $p$ with transformation vector $u$ such that all of its inputs and outputs are catalysts.
\end{theorem}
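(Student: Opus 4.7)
The plan is to construct $p$ explicitly via a perfect-matching-with-cycle-breaking procedure. The core observation is that semi-cyclicity $\mathcal{A}^\top u = 0$ gives, for each internal entity $a$, the equality
\[
    N_a \;:=\; \sum_t u_t\, \mathcal{S}_0^{t,a} \;=\; \sum_t u_t\, \mathcal{S}_1^{t,a},
\]
so the $N_a$ output-slots labeled $a$ (summed over all $u_t$ copies of each transformation $t$) can be put in bijection with the $N_a$ input-slots labeled $a$. Doing this independently for every internal entity produces a perfect matching of every output-slot with a same-labeled input-slot across the collection of transformation-copies.

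First I would view this perfect matching as a directed multigraph $H$ on the $\sum_t u_t$ transformation-copies, with one edge per matched output/input pair. In general $H$ contains directed cycles: such a cycle alternates entity-labeled matching edges with transformation-copies that input one of those labels and output another. The second step is to break these cycles: for each directed cycle in $H$, unmatch exactly one of its matching edges (between an output-slot of some $a$ and an input-slot of the same $a$). The now-unmatched input-slot becomes a process input, the unmatched output-slot becomes a process output, both labeled $a$, and the remainder of the cycle supplies a forward path between them. Iterating until no directed cycles remain yields an acyclic matching, i.e.\ a valid process graph $p$ whose transformation vector is still $u$.

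Third I would verify the catalytic property: every process input comes, by construction, paired with a process output of the same label on the same broken cycle, and the surviving portion of that cycle threads them together by a forward path in $p$. Hence every input and every output of $p$ is a catalyst.

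The main obstacle is the correctness of the cycle-breaking step. Successive breaks alter $H$, so cycles can share matching edges and one must check that the process terminates with an acyclic graph while simultaneously preserving the same-label pairing for every dangling slot. I expect this to be handled by an induction on the number of directed cycles in $H$, or equivalently by peeling off one catalytic cycle at a time via Lemma~\ref{thm:30925853} applied to successive residual subsystems; the interplay between Propositions~\ref{thm:84293018} and~\ref{thm:89741902} is what keeps the book-keeping consistent — semi-cyclicity guarantees the initial perfect matching exists, while semi-conservation (through the subunit vector $n$) ensures the broken cycles correctly re-pair same-label input/output slots at every stage.
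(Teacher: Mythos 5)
Your construction runs in the opposite direction from the paper's --- you build a perfect matching first and then delete edges to break cycles, whereas the paper starts from the fully unmatched parallel composition and greedily \emph{adds} valid (non-cycle-creating) matches until none remain --- and the obstacle you flag at the end is not mere book-keeping: it is a genuine gap that the deletion-based procedure, as stated, does not survive. The problem is that a later cycle-break can sever the witness path created by an earlier one. Concretely, take $E=\{a,b,c\}$ with $\alpha: a+b\to c$ and $\beta: c\to a+b$, $u_\alpha=u_\beta=1$, $n=(1,1,2)$; this is an organizer. The perfect matching yields the two directed cycles $a\to\alpha\to c\to\beta\to a$ and $b\to\alpha\to c\to\beta\to b$, sharing the matched $c$-slot. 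If you break the first cycle at the $a$-edge and then the second at the $c$-edge, you obtain an acyclic process with inputs $\{a,c\}$ and outputs $\{a,c\}$ in which the input labeled $a$ reaches only the output labeled $c$: it is not a catalyst, so the conclusion of the theorem fails for this output of your algorithm. (Breaking the shared $c$-edge first would have worked, but nothing in your procedure forces that choice, and neither ``induction on the number of cycles'' nor peeling off cycles via Lemma~\ref{thm:30925853} selects the right edge; termination and the same-label pairing of each individual broken edge were never the issue.)

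What the failed run above reveals is the property your argument is missing: in that terminal state the match of $a_{\mathrm{out}}$ to $a_{\mathrm{in}}$ is still \emph{valid}, i.e.\ your procedure can halt at a non-maximal matching. The paper's proof hinges precisely on maximality: once no valid match remains, every same-labeled input/output pair is an \emph{invalid} match, which by definition means a directed path already runs from that input to that output, and (using the balance $\mathcal{A}^\top u=0$, which is preserved under matching) every remaining input has a same-labeled remaining output to pair with. So the repair is either to append a second phase that keeps re-matching valid same-labeled pairs until none remain --- at which point you have reproduced the paper's argument and the perfect-matching phase is doing no work --- or to prove that the breaking edges can always be chosen so that no valid match survives, which you have not done. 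A secondary point: the balance you invoke is $\mathcal{S}^\top u=0$, but the organizer axiom only gives $\mathcal{A}^\top u=0$; when $\mathcal{A}<\mathcal{S}$ the full input/output slots of the transformations need not admit a perfect matching, so the matching must be restricted to the slots counted by $\mathcal{A}$.
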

\begin{proof}
    Consider the graph of the process given by the parallel composition of all the transformations in $u$. We will proceed by matching inputs and outputs, making sure not to introduce any directed cycles as to preserve the topological validity we described before. Given that there are only a finite number of inputs and outputs, this procedure is guaranteed to terminate. At that point, all possible matchings between inputs and outputs will yield a directed cycle, and hence a catalyitic cycle. What this means is that each entity appearing as an input (or output) is, in fact, a catalyst.
\end{proof}

Notice that since the vector $n$ is full, all internal entities will be part of some catalytic cycle in the process $p$ implied by the theorem, even if they do not appear as its inputs or outputs. Furthermore, since $u$ is full, all internal transformations are involved in some catalytic cycle. This is the sense in which an organizer can be considered autonomous: \emph{it is entirely constituted by catalytic cycles}. 

Let us now highlight the importance of catalysis in relation to autonomy, by considering minimal organizers. First, given an organizer $(\mathcal{S},\mathcal{A},n,u)$ we say that it is \emph{minimal} if no proper subsystem is itself an organizer. Formally, this means that for each subsystem $0<\mathcal{A}'<\mathcal{A}$, and subvectors $0<n'<n$ and $0<u'<u$, the tuple $(\mathcal{S},\mathcal{A}',n',u')$ is not an organizer. Thus, a cycle in the system graph is \emph{irreducible} if it cannot be expressed as the concatenation of two other cycles.

\begin{theorem}
    Minimal organizers are equivalent to irreducible cycles of the system graph.
\end{theorem}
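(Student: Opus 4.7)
The plan is to prove both implications of the equivalence, leveraging the fact (Lemma~\ref{thm:30925853}) that every organizer contains a catalytic cycle, together with the complementary observation that a simple directed cycle in the system graph itself naturally induces an organizer whose support is precisely that cycle, with $n$ and $u$ given by indicator vectors on the entities and transformations appearing in the cycle.

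For the forward direction, let $C = e_1 \xrightarrow{t_1} e_2 \xrightarrow{t_2} \cdots \xrightarrow{t_k} e_1$ be an irreducible cycle. I would define $\mathcal{A}$ by setting $\mathcal{A}_0^{t_i,e_i} = \mathcal{A}_1^{t_i,e_{i+1}} = 1$ (indices mod $k$), and zero elsewhere, and take $n$ and $u$ as the indicator vectors on the entities and transformations of $C$. A direct computation shows $\mathcal{A} n = 0$ (each internal transformation consumes and produces one subunit) and $\mathcal{A}^\top u = 0$ (each internal entity is produced once and consumed once), so $(\mathcal{S},\mathcal{A},n,u)$ is indeed an organizer. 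For minimality, any proper sub-organizer $(\mathcal{A}',n',u')$ would have support on a proper subset of the edges of $C$; by Lemma~\ref{thm:30925853} this sub-organizer must itself contain a catalytic cycle $C'$ strictly inside $C$, and the unused arcs of $C$ would close up into a complementary cycle $C''$, exhibiting $C$ as the concatenation of $C'$ and $C''$ — contradicting irreducibility.

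For the reverse direction, start with a minimal organizer $(\mathcal{S},\mathcal{A},n,u)$. Apply Lemma~\ref{thm:30925853} to extract a catalytic cycle $C$ inside $\mathcal{A}$, and construct the subsystem $\mathcal{A}^C$ supported on exactly the edges of $C$ with unit multiplicities. Let $n^C$ and $u^C$ be the corresponding indicator vectors; by the forward construction the tuple $(\mathcal{S},\mathcal{A}^C,n^C,u^C)$ is itself an organizer. Because $n$ and $u$ are full internal and integer-valued, they are at least $1$ on their internal supports, so $n^C \leq n$, $u^C \leq u$, and $\mathcal{A}^C \leq \mathcal{A}$ componentwise. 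Minimality then forces equality throughout, so $\mathcal{A}$ coincides with the simple cycle $C$ with multiplicities one. Finally, irreducibility of $C$ follows by the same style of contradiction: any decomposition of $C$ as a concatenation $C_1 \cdot C_2$ of two shorter cycles would let $C_1$ alone induce a proper sub-organizer inside $\mathcal{A}$, again violating minimality.

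The main obstacle lies in the reverse direction, namely ruling out the possibility that a minimal organizer carries strictly larger multiplicities in $\mathcal{A}$, $n$, or $u$ than those of the embedded cycle. One must argue that subtracting the ``cycle indicator'' from $(\mathcal{A},n,u)$ leaves a genuinely smaller tuple that still satisfies the organizer equations, and must reconcile this with the phrasing of minimality in the excerpt, which requires a simultaneous strict reduction in all three components. Checking that the cycle produced by Lemma~\ref{thm:30925853} can be cleanly excised whenever $\mathcal{A}^C < \mathcal{A}$ — without collapsing the residual subsystem into something that no longer admits a full internal vector — is where the bookkeeping is delicate and where the proof needs to be executed carefully.
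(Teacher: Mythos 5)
Your proposal follows essentially the same route as the paper: the forward direction uses the identical indicator-vector construction ($n$, $u$, and $\mathcal{A}$ all set to $1$ on the cycle), and the reverse direction likewise extracts a catalytic cycle via Lemma~\ref{thm:30925853}, builds the induced sub-organizer, and invokes minimality to force the organizer to coincide with an irreducible cycle. You are in fact more explicit than the paper about two points it glosses over (minimality of the cycle-induced organizer and the multiplicity bookkeeping in the reverse direction); the one small quibble is that your appeal to a complementary cycle $C''$ is unnecessary, since a proper edge-subset of a simple (irreducible) cycle contains no directed cycle at all, which contradicts Lemma~\ref{thm:30925853} directly.
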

\begin{proof}
    First, we establish that an irreducible cycle in the system graph is a minimal autonomous subsystem. Construct a vector $n$ for which each entity in the cycle has a value of $1$. Then construct a vector $u$ which has a value of $1$ for each transformation appearing in the cycle. Finally, construct a subsystem $\mathcal{A}$ that has a value of $1$ for each edge in the cycle. These data satisfy the requirements of an organizer. Suppose now we have a minimal organizer. By Lemma~\ref{thm:30925853}, such a system contains a catalytic cycle. That catalytic cycle is a cycle in the system graph, which, if it is not itself an irreducible cycle, it contains one. By the previous arguments, we can construct an organizer based on that cycle. But this means that our original system contains an organizer, and, since that original system was minimal, it means that it was an irreducible cycle to begin with.
\end{proof}

\section{Approaching life: dual aspects of organization}
In the rest of the paper, we will delve into various aspects of organization emerging from our formalism. Our focus will be on dual concepts of self-organization, which we term \emph{autosynthetic} and \emph{autoanalytic}. The bulk of literature on organization and abiogenesis directs its attention towards what we term autosynthetic organization: networks of entities and transformations that ensure the self-sufficient \textit{production} of the system's components. This emphasis proves particularly relevant in the study of life's origins, given the initial absence of living or self-organized entities, which necessarily requires proliferation or production in the first place. However, an equally important form of organization is consumptive, not productive: molecules undergo controlled degradation in cells, cell death in tissues is strongly regulated, and even individual organisms in populations die in a non-random way. This alternate perspective, of holding consumptive and productive processes as equals, is significant in scenarios where organizational components are abundant, and passive degradation or decay mechanisms are not sufficient to maintain control in the system. In these instances, a need arises for self-directed breakdown of components. To us, this is the essence of constructive-combinatorial systems such as chemical, biological, and ecological systems, where the participant components are produced, reused, and degraded in a \textit{self}-controlled fashion. Moreover, the breakdown of aged or dysfunctional entities becomes necessary to reuse basic building components and energy for other processes --- what is known as catabolism, for which anabolism serves a dual role.

\subsection{Autosynthesis}

In this section we focus on systems that are able to collectively increase their entities' populations, a property we term \emph{autosynthesis}. Here, given that \textit{internal} entities and transformations are fundamental, we assume that these are in play unless we write `external' explicitly. Formally, this means that $\mathcal{S}$ is now a system without external entities and transformations with respect to a subsystem $\mathcal{A}$ with an organizer.

\begin{definition}
    An \textbf{autosynthetic} system is a system $\mathcal{S}$ together with a non-zero subsystem $0<\mathcal{A}\leq\mathcal{S}$, full vectors $n$ and $u$ such that $\mathcal{A}n=0$, $\mathcal{A}^\top u=0$, and $(\mathcal{S}^\top u)_a>0$ for all entities $a$. 
\end{definition}

These conditions for autosynthesis are a special case of the `self-maintenance' property in chemical organization theory \cite{dittrich2007chemical}. Moreover, it is also related to the definition of `autocatalytic metabolic cycle' in \cite{barenholz2017design}, `autocatalysis' in \cite{blokhuis2020universal}, and `self-replicable' in \cite{deshpande2014autocatalysis}. It is also related to the `self-maintenance' property in \cite{dittrich2007chemical}. It is also somewhat connected to the `reflexively autocatalytic' property in \cite{hordijk2022autocatalytic}. In essence, our definition implies that this organized system can be understood as a network of catalytic cycles, all of whose constraints are produced within the system. This is closely related to the `closure to efficient causation' property postulated by Rosen \cite{rosen1991life}.

Notice that our definition of autosynthesis is given in terms of the vectors $n$ and $u$, which does not guarantee that there is a process which we could understand as autonomously synthetic. So, to ensure that this is in fact the case, we can refer to a process as \emph{autosynthetic} only when the following conditions are satisfied: 
\begin{enumerate}
    \item Every input has an invalid match. 
    \item The counts of each entity increase across the process. 
    \item Each output has a path from some input.
\end{enumerate} 
If every match is invalid then all matches generate some catalytic cycle. Moreover, since the counts of all entities increase, each input will have matches, and hence be a catalyst. Finally, each output is produced catalytically since there is a path to each output to some input, which is a catalyst. In other words, an authosynthetic process is autonomous and \textit{fully} synthetic. Note that the properties of autosynthetic processes are all preserved whenever multiple autosynthetic processes are composed in parallel. Furthermore, matching autosynthetic processes also preserves those properties. Then it makes sense to consider minimal, generating autosynthetic processes that cannot be decomposed into smaller ones by matching or parallel composition. 

Now we can show that despite being formulated in terms of vectors, the definition of autosynthetic systems implies the existence of a corresponding autosynthetic process graph. The use of vectors was a strategic step, as it is much easier to work with vectors than with process graphs, given that the latter have much more structure, but the fact that we can ensure the existence of this corresponding graph is important as to relate `formal' properties to `physical' ones (which occur over time). The following theorem shows this implication. 

\begin{theorem}\label{thm:85743890}
    Each autosynthetic system has a compatible autosynthetic process.
\end{theorem}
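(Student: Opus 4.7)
The plan is to construct the process $p$ by applying the matching procedure from the proof of Theorem~\ref{thm:39572809} to the full system $\mathcal{S}$, and then verify the three conditions defining an autosynthetic process. First I would take the parallel composition of $u_i$ disconnected copies of every transformation $i$, with each copy carrying its full complement of $\mathcal{S}_0$-inputs and $\mathcal{S}_1$-outputs as entity nodes; this disconnected graph has $(\mathcal{S}_0^\top u)_a$ type-$a$ nodes on the transformation-input side and $(\mathcal{S}_1^\top u)_a$ on the transformation-output side, for every entity $a$. I would then iteratively merge a transformation-output with a transformation-input of the same label, rejecting any merge that would create a directed cycle, and stop when no valid merge remains. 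Each merge preserves the at-most-one-in / at-most-one-out property of entity nodes, so the resulting $p$ is a valid process graph with transformation vector $u$.

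To verify condition~2 (counts increase), I note that each accepted merge removes exactly one type-$a$ input and one type-$a$ output from the unmatched pool, so the net change of entity $a$ across $p$ equals $(\mathcal{S}_1^\top u)_a - (\mathcal{S}_0^\top u)_a = (\mathcal{S}^\top u)_a > 0$ by the autosynthesis hypothesis. Condition~1 (every input has an invalid match) then follows from the same inequality: for each type $a$, unmatched outputs strictly outnumber unmatched inputs, so whenever an unmatched input of type $a$ remains, at least one unmatched output of type $a$ still exists; and since the matching procedure has terminated, every such potential pairing must be invalid (otherwise the procedure would have continued).

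Condition~3 (each output has a path from some input) is where I expect the main obstacle. I would verify it by a backward trace from any output $o$: follow its unique incoming edge to a transformation $t$, then pick any input edge of $t$ to reach an entity node, and iterate. Because the process graph is a finite DAG, the trace must terminate, and it can only do so either at an entity node with no incoming edge (a process input, as desired) or at a transformation with no inputs at all. The second possibility is what has to be ruled out, and I would do so by invoking Proposition~\ref{thm:84293018}: in the autosynthetic setting every transformation appearing in $p$ is internal (because $u$ is full), so by the semi-conservation of $(\mathcal{S},\mathcal{A},n)$ it must have both an input and an output in $\mathcal{A}\leq\mathcal{S}$, and in particular at least one input. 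Hence the trace terminates at a process input and supplies the required directed path to $o$, completing the proof.
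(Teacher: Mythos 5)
Your proposal is correct and follows essentially the same route as the paper's own proof: parallel composition of the transformations in $u$, greedy acyclic matching until only invalid matches remain, the count inequality $(\mathcal{S}^\top u)_a>0$ to secure conditions 1 and 2, and Proposition~\ref{thm:84293018} to guarantee that every transformation node has an input. Your backward-trace argument for condition 3 is a more explicit justification of the step the paper asserts only as ``all output entities will be connected to some input entity,'' but it is the same underlying idea, not a different method.
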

\begin{proof}
    Let $(\mathcal{S},\mathcal{A},n,u)$ be an autosynthetic system. We proceed as in the proof of Theorem~\ref{thm:39572809}, with the parallel composition of the transformations in $u$. We continue by matching inputs and outputs, ensuring that no directed cycles are formed. At some finite time point, all possible matches left will be invalid, and hence part of a catalytic cycle. Moreover, since $u$ is a transformation vector through which all entities increase, each input appears in excess within the outputs. Since all transformations have an input by Proposition~\ref{thm:84293018}, then the resulting process will have at least one input. Furthermore, all output entities will be connected to some input entity. Thus, the process is autosynthetic. 
\end{proof}

\subsection{Autoanalysis}

For us, \textit{autoanalysis} is the dual property of autosynthesis, and can be defined as the ability to autonomously consume the entities involved in the system. Because autoanalysis and autosynthesis are dual aspects of organization, the discussion here will be shorter, as it largely mirrors the preceding text.

\begin{definition}
    An \textbf{autoanalytic} system is a system $\mathcal{S}$ together with a non-zero subsystem $0<\mathcal{A}\leq\mathcal{S}$, full vectors $n$ and $u$, such that $\mathcal{A}n=0$, $\mathcal{A}^\top u=0$, and $(\mathcal{S}^\top u)_a<0$ for all entities $a$. 
\end{definition}
 
We define \emph{autosynthetic processes} in analogy to the previous section. In short, an autoanalytic process is one in which:
\begin{enumerate}
    \item Each output has an invalid match.
    \item Every entity decreases its count. 
    \item Each input has a path to some output.
\end{enumerate}

\begin{theorem}
    Each autoanalytic system has a compatible autoanalytic process.
\end{theorem}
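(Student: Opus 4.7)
The plan is to mirror the argument used for Theorem~\ref{thm:85743890}, exploiting the symmetry between autosynthesis and autoanalysis. I would begin with the disconnected graph formed by the parallel composition of all transformations prescribed by the transformation vector $u$, whose cumulative inputs and outputs are $\mathcal{S}_0^\top u$ and $\mathcal{S}_1^\top u$, respectively. The hypothesis $(\mathcal{S}^\top u)_a < 0$ for every entity $a$ says exactly that, entity by entity, inputs strictly outnumber outputs — the mirror image of the autosynthetic case, where outputs were in excess.

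Next I would greedily match outputs to inputs of the same label, refusing any match that would introduce a directed cycle, so that the resulting graph remains a valid process in the sense of Definition~\ref{def:process}. The procedure terminates in finitely many steps, and because outputs are now the scarce side, every output is successfully matched; the unmatched inputs survive as the inputs of the resulting process, of which there exists at least one of each internal entity type since $u$ is full and the net deficit is strictly positive. Property (2) of an autoanalytic process — that every entity's count decreases — is immediate from $(\mathcal{S}^\top u)_a < 0$. For property (1), I would argue that at the termination of the matching procedure every remaining candidate match into a matched output would complete a directed cycle; together with Lemma~\ref{thm:30925853}, which guarantees that the organizer is saturated with catalytic cycles, this shows each matched output lies on such a cycle and hence admits the required invalid match.

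The main obstacle is property (3): each residual input must have a directed path to some output. In the autosynthetic case this was almost automatic because excess outputs provided natural terminating points for forward propagation, whereas here excess inputs threaten to leave some input stranded in a purely internal substructure with no exit. My plan is to trace forward from any residual input through the constructed process, invoking Proposition~\ref{thm:84293018} at each transformation so that consumption of an internal entity is matched by production of another internal entity, keeping the trace alive; acyclicity of the process — enforced throughout the matching phase — then forbids perpetual internal circulation, so the forward trace must terminate at an unmatched output. Assembling these three verifications produces the desired autoanalytic process compatible with $(\mathcal{S},\mathcal{A},n,u)$.
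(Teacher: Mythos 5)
Your construction is the same one the paper intends: its own proof of this theorem is simply ``analogous to that of Theorem~\ref{thm:85743890},'' and your dualization --- parallel composition of the transformations in $u$, greedy cycle-free matching, count-decrease read off from $(\mathcal{S}^\top u)_a<0$, and a forward trace through Proposition~\ref{thm:84293018} terminated by acyclicity --- is exactly the right way to carry that analogy out. Your treatment of property (3), which you correctly identify as the one place where the symmetry is not automatic, is sound: every internal transformation has an output, so the forward walk never gets stuck, and finiteness plus acyclicity force it to end at an unmatched output node.

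One sentence is false, however, and it sits in tension with the rest of your own argument: ``because outputs are now the scarce side, every output is successfully matched.'' This is impossible. After matching, the process graph is a finite directed acyclic graph in which every transformation node has at least one outgoing edge (Proposition~\ref{thm:84293018} again), so the graph must have sink nodes, and those sinks are precisely unmatched output nodes --- the outputs of the process. Indeed, your own verifications of properties (1) and (3) quietly rely on such unmatched outputs existing. What the excess of inputs actually buys you is something different, and it is what property (1) needs: since $(\mathcal{S}_0^\top u)_a>(\mathcal{S}_1^\top u)_a$ for every entity $a$, at termination there remains at least one unmatched input node of every label, so every surviving output node has a candidate match of the same label, and that match must be invalid because the greedy procedure halted --- which is exactly the statement that each output of the process has an invalid match. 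With that sentence repaired, the proof is correct and coincides with the paper's.
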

\begin{proof}
    The proof is analogous to that of Theorem~\ref{thm:85743890}.
\end{proof}

\section{\textit{Ōmeteōtl}: Self-organized systems and their resilience}

As we discussed in Section~\ref{sec:autonomy}, we can understand autonomy of a system in terms of its organizing processes (organizers). To us, however, only some of these autonomous, organized systems, could be called \textit{self}-organized. We understand self-organization as the ability of a system to increase its entities' counts \textit{within a finite bound}, reaching a stable trajectory in its dynamics. This implies that in such a system, autosynthetic and autoanalytic processes coincide, without a complete `domination' of one or the other, as long as the system retains \textit{positive} counts. Arriving at this definition allows us to see how all of the individual properties described heretofore converge.

León-Portilla described how in Náhuatl cosmogony there were two primordial deities of creation, dual and complementary, named \textit{Ometecuhtli} and \textit{Omecihuatl}. He then proposed that they coincided into a single, dual deity, called \textit{Ōmeteōtl}, able to re-produce all there is, including itself \cite{portilla2006filosofia, wiercinski1984ometeotl}. Many other cultures coincide with similar dual aspects of `self-sustained' systems. We believe that the landscape of self-organization theories needs to adopt a similar duality, moving away from `autocatalysis'-only, into a more integral view of living systems, where the consumption/destruction of components is just as important as their production. We can show now how the notions we have developed so far crystallize into a \textit{self}-organized system.

\begin{definition}
    A \textbf{self-organized} system consists a system $\mathcal{S}$ together with a non-zero subsystem $0<\mathcal{A}\leq\mathcal{S}$, a full vector $n$, and vectors $u$ and $v$ with $u+v$ full, such that $\mathcal{A}n=0$, $\mathcal{A}^\top (u+v)=0$, $(\mathcal{S}^\top u)_a>0$, $(\mathcal{S}^\top v)_a<0$, and $\mathcal{S}^\top (u+v)=0$.
\end{definition}

As in the previous sections, we can show that there are processes that obey the vectorial definition of self-organization given. Let us then say that a process is \emph{self-organized} if it satisfies the following conditions: 
\begin{enumerate}
    \item Each input and output have an invalid match. 
    \item It is cyclic. 
    \item It contains a counts-increasing subprocess. 
    \item All inputs have a path to an output and all outputs have a path from an input. 
\end{enumerate}

We can prove this existence formally:

\begin{theorem}
    Each self-organized system has a compatible self-organized process.
\end{theorem}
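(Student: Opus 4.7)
The plan is to mimic the matching construction used in the proofs of Theorem~\ref{thm:39572809} and Theorem~\ref{thm:85743890}, this time applied to the composite transformation vector $w = u+v$. By hypothesis $w$ is full, and the equations $\mathcal{A}^\top(u+v) = 0$ and $\mathcal{S}^\top(u+v) = 0$ give, respectively, internal and global balance between inputs and outputs of the parallel composition of the transformations indicated by $w$. I would first form this parallel composition --- the disconnected graph containing $w_i$ copies of each transformation $i$ --- and then run the same greedy procedure as before: repeatedly pick a pair of input and output nodes sharing a label and merge them, refusing any merge that would introduce a directed cycle. Finiteness of the initial graph guarantees termination.

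At termination, every pair of still-unmatched input and output with a common label must already be joined by a directed path in the graph built so far, so any further match is invalid --- this gives condition~(1). The path that forbade each such match is itself a directed path from an input to an output of the same label, which supplies condition~(4) in both directions. The global balance $\mathcal{S}^\top w = 0$ forces the multisets of unmatched inputs and unmatched outputs to coincide as ensembles, so the resulting process is cyclic in the sense of condition~(2). For condition~(3), I would isolate the subprocess cut out by the transformations indicated by $u$, namely the subgraph of transformation nodes with multiplicities inherited from $u$ together with the entity nodes incident to them and the matched edges internal to this subgraph. Because applying precisely those transformations produces a net stoichiometric change of $\mathcal{S}^\top u$, and this is strictly positive on every entity by hypothesis, that subprocess is counts-increasing.

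The step I expect to be the main obstacle is the joint realisability of all four conditions inside a \emph{single} matched graph. Greedy matching done without care could absorb $u$-transformation outputs into $v$-transformation inputs so that the $u$-subprocess ceases to be a recognisable counts-increasing block, or could leave orphan unmatched nodes with mismatched labels --- a situation the balance $\mathcal{S}^\top w = 0$ only rules out once one argues that greedy matching respects balance label-by-label, not just in total. A careful ordering of the greedy procedure (for example, completing within-$u$ matches and within-$v$ matches before admitting cross-matches between the two, and using $\mathcal{A}^\top u = -\mathcal{A}^\top v$ to govern how the leftover imbalances rebalance across the cross-matching step) should make the four requirements simultaneously satisfiable, but this coordination between the catalytic-cycle construction and the preservation of the $u$-subprocess is the delicate point that deserves most of the attention.
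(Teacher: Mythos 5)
Your construction is essentially the paper's own proof: parallel-compose the transformations counted by $u+v$, greedily match inputs to outputs while refusing any merge that creates a directed cycle, and then read off condition (1) from the invalid remaining matches, condition (2) from the cyclicity of $u+v$, condition (3) from the $u$-block, and condition (4) from the fact that every maximal path in the finite acyclic result begins and ends at an entity node. The coordination issue you single out as the delicate point is in fact harmless for exactly the reason you yourself give when verifying condition (3) --- the $u$-subprocess is identified by its transformation nodes, whose net stoichiometric effect is $\mathcal{S}^\top u \gg 0$ no matter how the incident entity nodes end up matched --- which is the (terse) justification the paper relies on, so no special ordering of the greedy matching is required.
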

\begin{proof}
    Let $(\mathcal{S}, \mathcal{A},n,u,v)$ be a self-organized system. Compose in parallel the transformations within $u+v$. This is sufficient for condition 3, since the parallel composition of $u$ is counts-increasing by definition. Proceed by making matches given by subsystem $\mathcal{A}$, until all matches left are invalid. This gives us condition 1. The process will be cyclic since we know $u+v$ is a cyclic vector by definition, so it satisfies condition 2. Finally, by Proposition~\ref{thm:84293018} all transformations have both an input and an output. This means that all paths begin and end in entity nodes, which gives us condition 4. 
\end{proof}

Thus, in simple words: a system is self-organized if it is able to autonomously synthesize itself, \textit{but also} controls this growth with a complementary self-consumption mechanism. We can add that a state of a system is self-organized if it contains the input of a self-organized process. 

\begin{figure}
    \centering
    \includegraphics[width=0.6\textwidth]{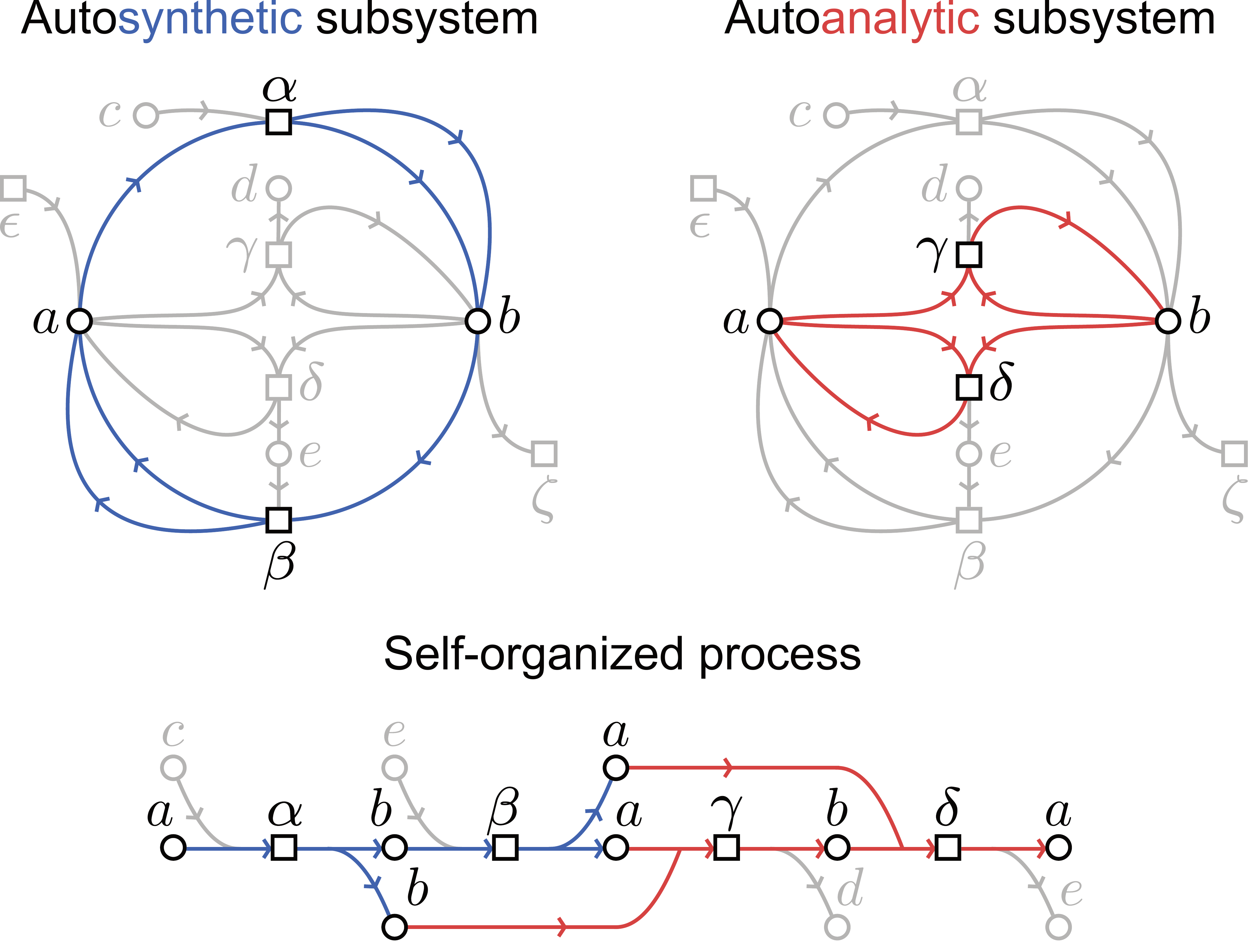}
    \caption{Autosynthetic (left) and autoanalytic (right) subsystems of the system graph. These systems can together be unfolded into a process that we can recognize as self-organized, as it meets the four criteria in the text.}
    \label{fig:autosynthesisanalysis}
\end{figure}

\subsection{System resilience}

Notice that when a self-organized system contains an autoanalytic component, this gives rise to the possibility that it may autonomously consume itself down to a state that loses the self-organization property. Let us refer to these states as \emph{sinks}. We can then say that a self-organized system is \emph{vulnerable} if there is a process starting from a self-organized state and ending in a sink. Otherwise, if a self-organized system is always able to return to a self-organized state, we say that it is \emph{resilient}. Thus, we need to characterize the nature of these sinks.

Let us consider states that contain \textit{enough} entities as to support the autosynthetic components of a system. If we find the minimal ensemble/state that satisfies this condition, then we can think of it as a \textit{seed} for the synthetic aspect of the system. This seed is then capable to regenerate all the components of the system. Let us define an \emph{autosynthetic state} as any state that contains the input of some autosynthetic processes. Then, we define a \emph{minimally-synthetic seed} (MSS) as the minimal of these autosynthetic states, for which no proper substate is itself autosynthetic. The proposition has a simple proof.

\begin{proposition}
    A state is a sink, if and only if, it does not contain a minimally-synthetic seed (MSS).
\end{proposition}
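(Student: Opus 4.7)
The plan is to note that a sink is, by definition, any state that fails to be self-organized, so the claim is equivalent to: a state $x$ is self-organized if and only if $x$ contains some minimally-synthetic seed. I would prove the two directions separately, each by unfolding the process-level characterizations developed earlier in the paper and using the fact that the ambient system has the organizer structure guaranteed in the definition of a self-organized system.

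For the ``if'' direction, I would start from an MSS $m \leq x$. By the definition of MSS, $m$ contains the input of some autosynthetic process $q$. Because $\mathcal{S}$ is a self-organized system, it also admits a compatible autoanalytic component coming from the vector $v$, and the combined transformation vector $u+v$ is cyclic and full by hypothesis. Then, following the same graph-construction strategy used in the proofs of Theorems~\ref{thm:85743890} and the self-organized process existence theorem, I would extend $q$ by composing in parallel with the transformations of $v$ and completing all valid matchings until only invalid matchings remain. The resulting process satisfies the four conditions of a self-organized process, and its input lies inside $x$, so $x$ is a self-organized state.

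For the ``only if'' direction, I would argue by contrapositive. Suppose $x$ is self-organized, so it contains the input $y$ of some self-organized process $p$. Condition~3 in the definition of a self-organized process guarantees a counts-increasing subprocess $q$ inside $p$. I would then show that $q$ (or, if necessary, a slight enlargement of it obtained by closing under paths from inputs to outputs within $p$) satisfies the three conditions characterizing an autosynthetic process: every input has an invalid match because of condition~1 inherited from $p$, counts strictly increase by the defining property of $q$, and each output is reachable from some input by the connectivity condition~4 of $p$. Hence the input of $q$ is an autosynthetic state contained in $x$. Since the ensembles form a well-founded poset under $\leq$ (all states are finite), a standard descent argument yields a minimal autosynthetic substate of this input, which is an MSS contained in $x$, contradicting the assumption.

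The main technical obstacle I anticipate is the step identifying a counts-increasing subprocess of a self-organized process with an honest autosynthetic process. A counts-increasing subprocess is only guaranteed to raise counts; verifying that its \emph{inputs} all admit invalid matches and that its \emph{outputs} are all connected back to its inputs may require restricting to the maximal subprocess of $p$ whose nodes lie on directed paths realizing the increase, or reconstructing the subprocess from its associated transformation subvector using the construction in the proof of Theorem~\ref{thm:85743890}. Once this identification is made rigorous, the rest of the proof is essentially bookkeeping about containment of ensembles and minimality.
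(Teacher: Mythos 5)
There is a genuine gap, and it enters in your very first sentence. You identify a sink with ``any state that fails to be self-organized,'' but the paper's operative notion --- the one that makes the vulnerable/resilient dichotomy complementary (``\emph{otherwise}, if a self-organized system is always able to \emph{return} to a self-organized state, we say that it is resilient'') and the one its own proof of this proposition uses --- is that a sink is a state from which \emph{no process reaches} a self-organized state. These are not equivalent, and under your reading the proposition is actually false: an MSS only guarantees that $x$ contains the input of some \emph{autosynthetic} process, and such a state will in general lack the additional entities demanded as inputs by the analytic transformations in $v$, so it need not contain the input of a full self-organized process. This defect surfaces concretely in your ``if'' direction: when you compose the autosynthetic process $q$ in parallel with the transformations of $v$, those transformations bring their own input entities, so the input ensemble of the resulting process is generally strictly larger than $x$ and your conclusion ``its input lies inside $x$'' does not follow. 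The step your construction is missing is precisely the crux of the paper's argument: from an MSS one first iterates the autosynthetic process to reach an \emph{arbitrarily large} state, and only at such a state does one launch a self-organized process. The proposition is about the reachability of self-organization from $x$, not about $x$ already being self-organized.

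Your ``only if'' direction is closer in spirit to the paper's: extract the counts-increasing subprocess guaranteed by condition~3, argue it is an honest autosynthetic process, and descend to a minimal autosynthetic substate using finiteness of ensembles. The technical obstacle you flag there (verifying invalid matches at the inputs and connectivity of the outputs for the restricted subprocess) is real, and the paper glosses over it as well, so that part is a fair reconstruction. But because of the reinterpretation of ``sink,'' what you would prove is ``self-organized implies contains an MSS,'' whereas the proposition requires ``\emph{can reach} a self-organized state implies contains an MSS.'' Closing that gap requires an additional backward argument along the process connecting $x$ to the eventual self-organized (hence autosynthetic) state --- the paper's terse ``some arbitrary growth would be possible to construct an autosynthetic state beginning at $x$'' --- and that step is absent from your proposal. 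In short: fix the definition of sink first; then both directions need the growth/reachability argument that your current plan bypasses.
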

\begin{proof}
    Let $(\mathcal{S},\mathcal{A},n,u,v)$ be a self-organized system. Let $x$ be a state that is a sink. Then, all proceses starting at $x$ fail to be autoynthetic, as it would otherwise be possible to reach an arbitrarily large state and eventually become self-organized. Then $x$ cannot contain an MSS. If, conversely, $x$ does not contain an MSS, it may not reach a self-organized state. Otherwise, such self organized state would be autosynthetic, and some arbitrary growth would be possible to construct an autosynthetic state beginning at $x$, which would mean it contains an MSS. 
\end{proof}

We can now consider the dual states, these appearing as outputs of autoanalytic processes. Similarly, we consider the minimal states that satisfy this property. We define an \emph{autoanalytic state} as one that contains the output of an autoanalytic process. Then, a \emph{minimally-analytic seed} (MAS) is a state that is minimally autoanalytic, but for which no proper substate is itself autoanalytic. We can show the formal relationship between MSSs and MASs and its relevance to  resilience:

\begin{lemma}
    A self-organized system is resilient, if and only if every minimally-analytic seed (MAS) contains an MSS.
\end{lemma}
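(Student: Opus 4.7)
The plan is to prove both directions by contraposition, using the previous proposition (a state is a sink if and only if it does not contain an MSS) as the key translation tool. On the forward side I will construct a vulnerability process from the offending MAS, and on the backward side I will extract an MAS from a hypothetical vulnerability process.

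For $(\Rightarrow)$, I would take the contrapositive: suppose some MAS $m$ fails to contain an MSS, and exhibit a process from a self-organized state to a sink. Because $m$ is an MAS, it is the output of some autoanalytic process $q : y \to m$. Pick any self-organized state $x^{*}$; if it does not already dominate $y$, first run the autosynthetic vector $u$ of the self-organized system enough times to enlarge it (the result is still self-organized, since any superstate of a self-organized state contains the input of the same self-organized process). Call the resulting state $x_{0}$. Apply $q$ to pass from $x_{0}$ to $x_{0} - y + m$, and then iteratively apply processes realizing the autoanalytic vector $v$. Since $(\mathcal{S}^{\top} v)_{a} < 0$ for every entity $a$, each iteration strictly decreases every count, so after enough rounds the state falls below every MSS and is therefore a sink by the previous proposition. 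Concatenating these pieces yields the desired vulnerability process.

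For $(\Leftarrow)$, assume every MAS contains an MSS and suppose for contradiction that a vulnerability process $p : x_{0} \to x_{\text{end}}$ exists with $x_{\text{end}}$ a sink. My aim is to show that $x_{\text{end}}$ (possibly after one further autoanalytic step) contains some MAS $m$. If $x_{\text{end}}$ is itself an autoanalytic state, this is immediate from the minimality definition of an MAS. Otherwise I extend $p$ by one more autoanalytic process drawn from the self-organized system, arriving at some state $x_{\star} \leq x_{\text{end}}$ which is now the output of an autoanalytic process and hence contains an MAS $m$. Since $x_{\star} \leq x_{\text{end}}$ and $x_{\text{end}}$ has no MSS, neither does $x_{\star}$, and thus neither does its substate $m$. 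This contradicts the hypothesis that every MAS contains an MSS, finishing the argument.

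The main obstacle is the extension step in $(\Leftarrow)$: one must argue that $x_{\text{end}}$, or a suitable substate reachable from it, genuinely supports some autoanalytic process so that an MAS can be located inside. If $x_{\text{end}}$ is so depleted that even the smallest autoanalytic input is unavailable, the one-step extension fails, and one instead has to decompose the transformation vector of $p$ itself along the $u + v$ split of the self-organized system and isolate an autoanalytic sub-process of $p$ whose output already lies inside $x_{\text{end}}$. In $(\Rightarrow)$ a smaller but analogous subtlety is making the iterated applications of $v$ genuinely compose into a single valid process; starting from a sufficiently enlarged $x_{0}$ and invoking Theorem~\ref{thm:85743890} on the existence of compatible autoanalytic processes should suffice.
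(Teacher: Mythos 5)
Your overall strategy---translating everything through the sink/MSS proposition and arguing by contraposition---is the same skeleton the paper uses, but your execution of the forward direction introduces a difficulty the paper's argument avoids, and you do not actually resolve it. You build the vulnerability process \emph{forwards}: enlarge a self-organized state $x^{*}$ to some $x_{0}\geq y$, apply $q:y\to m$, and land at $x_{0}-y+m$. That state contains the sink $m$, but it is not itself a sink---the residue $x_{0}-y$ may perfectly well contain an MSS, and the definition of vulnerability requires the process to \emph{end} in a sink. Your proposed repair, iterating $v$ until ``the state falls below every MSS,'' does not go through: each application of $v$ requires the current state to contain the input of a process realizing $v$, and once the counts drop below that threshold you can no longer apply $v$, yet the remaining state may still contain an MSS (nothing relates the two thresholds). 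The paper instead works \emph{backwards} from the MAS: since $m$ is the output of an autoanalytic process, one precomposes with further autoanalytic steps so that the process ends \emph{exactly at} $m$ while its input becomes arbitrarily large, hence self-organized; then $m$ itself is the terminal state, it is a sink by the previous proposition, and no residue ever appears. (The paper also gives a cleaner direct proof of this direction that uses resilience itself: from the MAS reach a self-organized state, grow autosynthetically from there, concatenate to get an autosynthetic process starting at the MAS, and conclude it contains an MSS. Your contrapositive route forgoes that shortcut.)

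For the backward direction you have correctly identified the crux---showing that the terminal sink of a vulnerability process contains, or leads to, an MAS lacking an MSS---but neither of your proposed fixes closes it: the one-step autoanalytic extension fails exactly when the sink is too depleted to feed any autoanalytic process, and the fallback of decomposing the transformation vector of $p$ along $u+v$ presupposes that an arbitrary vulnerability process is assembled from those particular vectors, which nothing guarantees. To be fair, the paper's own treatment of this direction is equally terse (``there is some MAS that is contained in some sink'' is asserted rather than derived), so you are no worse off than the source here; but as written your $(\Leftarrow)$ is an honest description of an obstacle rather than a proof, and the $(\Rightarrow)$ construction needs to be reversed as above before it is correct.
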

\begin{proof}
    Let $(\mathcal{S},\mathcal{A},n,u,v)$ be a self-organized system. First, assume that the system is resilient. Let $x$ be an MAS. Since $x$ is the output of an autoanalytic process, it is possible to reach $x$ from arbitrarily large states. Then, it is possible to reach $x$ from some self-organized state. Since the system is resilient, it is possible to reach a self-organized state $y$ from $x$. Then, it is possible to construct an autosynthetic process starting at $y$ to an arbitrarily large state. It is then possible to construct an autosynthetic process starting at $x$, in which case $x$ contains an MSS. Thus, every MAS contains an MSS. 
    
    Second, assume that the system has some MAS that is a sink. Since MASs are reachable via autoanalytic processes, the system is vulnerable. Then, conversely, if a system is vulnerable, a process exists from self-organized states to sink ones. Since MASs are minimal autoanalytic states, there is some MAS that is contained in some sink, and thus is itself a sink.
\end{proof}

\section{Concluding remarks}

Traditionally, researchers define an organism (and similarly, life itself) by compiling a list of features thought to be important for organisms and the ways they relate to their environment, and then looking at the kinds of natural systems that fulfill such desiderata. However, so far, such a strategy has not produced a satisfactory definition \cite{benner2010defining, tirard2010definition, tsokolov2009definition}. Here, we have taken a different approach, not to define life, but to formalize the kinds of systems that living organisms are, starting from very simple considerations. We described how the components of a system can undergo elemental transformations, which in turn unfold in time to give place to system-level features that we believe are essential to living systems: autonomy, self-organization, and resilience. We concluded by proposing that an autonomous, resilient self-organized system \textit{is a model of an organism}. This suggests a stronger hypothesis: \textit{all organisms realize a formal model of an autonomous, resilient, self-organized system}. In other words, we believe that however varied organisms are, they all belong to a natural `class' of systems, which \textit{admits} the kind of model here developed. Importantly, we are not stating that our proposed model is the unique model that organisms admit. 

First, we would like to mention a few topics for future work on the model itself, and then discuss broader implications. One topic is the explicit inclusion of the environment of an organism in its behavior. In our formalism, we deliberately excluded the environment as a center of discussion, in order to focus on the organism itself, but it is already implicit when we made a distinction between a system/subsystem and its complement (the environment, e.g., when discussing \textit{internal} entities/processes). Another interesting avenue of research is to explore biological dynamics that diverge from the same `initial condition', which is already evident in our model (\textit{e.g.}, when we say that matching of process graphs is not uniquely defined). This would open paths of inquiry for studying historical contingency, pre-adaptation dynamics, and evolvability.

The independence of our model from particular physical entities or specific organisms does not imply that physical constraints and properties are irrelevant. On the contrary, a natural step in this research program would be to find the instances in nature that \textit{embody} the model properties. The physical realization of a formal model --- the organisms themselves, here --- can certainly exhibit features not observed within the formal world alone. This is, essentially, what Rosen calls the `realization problem' of a model of living systems \cite{rosen1991life}, which is an ambitious but needed research goal that would give more ground to a coherent theory of organisms. We caution that such efforts will be, as Rosen points out, highly nontrivial, because in general synthesis is not the inverse of analysis: while molecular biology has been highly successful in isolating and cataloging all kinds of individual molecules inside organisms, nobody has ever built an organism starting from molecules. This suggests that the problem of life origins is of a different nature than the problem of defining what life is. Thus, developing a theory of organisms is only a partial step towards understanding life more generally. This work represents a small contribution towards this goal.

\section{Acknowledgments}
P.M.Z. acknowledges Judith Rosen, Stuart Kauffman, Joana Xavier, Carlos Gershenson, Ricard Solé, Otto Rössler, and Howard Pattee for insightful discussions about different aspects of this research. A.O.M. acknowledges Walter Fontana for discussions about the nature of autocatalysis and its dual, destructive autocatalysis. E.P.B. acknowledges support from the National Institutes of Health (U.S.A.) under grant numbers NIH T32GM142616 and R35GM138354. P.M.Z. and A.O.M. acknowledge support from the Santa Fe Institute and the Omidyar Complexity Fellowship.

\bibliographystyle{elsarticle-num} 
\bibliography{cas-refs}

\begin{thebibliography}{10}
\expandafter\ifx\csname url\endcsname\relax
  \def\url#1{\texttt{#1}}\fi
\expandafter\ifx\csname urlprefix\endcsname\relax\def\urlprefix{URL }\fi
\expandafter\ifx\csname href\endcsname\relax
  \def\href#1#2{#2} \def\path#1{#1}\fi

\bibitem{bergson1984creative}
H.~Bergson, Creative evolution, University Press of America, 1984.

\bibitem{peng2022wilhelm}
Z.~Peng, K.~Paschek, J.~C. Xavier, What wilhelm ostwald meant by “autokatalyse” and its significance to origins-of-life research: Facilitating the search for chemical pathways underlying abiogenesis by reviving ostwald's thought that reactants may also be autocatalysts, BioEssays 44~(9) (2022) 2200098.

\bibitem{rossler1971systemtheoretisches}
O.~E. R{\"o}ssler, Ein systemtheoretisches modell zur biogenese/a system theoretic model of biogenesis, Zeitschrift f{\"u}r Naturforschung B 26~(8) (1971) 741--746.

\bibitem{ganti1975organization}
T.~G{\'a}nti, Organization of chemical reactions into dividing and metabolizing units: the chemotons, BioSystems 7~(1) (1975) 15--21.

\bibitem{turing1990chemical}
A.~M. Turing, The chemical basis of morphogenesis, Bulletin of mathematical biology 52 (1990) 153--197.

\bibitem{dittrich2007chemical}
P.~Dittrich, P.~S. Di~Fenizio, Chemical organisation theory, Bulletin of mathematical biology 69 (2007) 1199--1231.

\bibitem{heylighen2015chemical}
F.~Heylighen, S.~Beigi, T.~Veloz, Chemical organization theory as a modeling framework for self-organization, autopoiesis and resilience, International Journal Of General Systems, submitted (2015).

\bibitem{kauffman1986autocatalytic}
S.~A. Kauffman, Autocatalytic sets of proteins, Journal of theoretical biology 119~(1) (1986) 1--24.

\bibitem{xavier2022small}
J.~C. Xavier, S.~Kauffman, Small-molecule autocatalytic networks are universal metabolic fossils, Philosophical Transactions of the Royal Society A 380~(2227) (2022) 20210244.

\bibitem{hordijk2022autocatalytic}
W.~Hordijk, M.~Steel, S.~Kauffman, Autocatalytic sets arising in a combinatorial model of chemical evolution, Life 12~(11) (2022) 1703.

\bibitem{rashevsky1954topology}
N.~Rashevsky, Topology and life: in search of general mathematical principles in biology and sociology, The bulletin of mathematical biophysics 16 (1954) 317--348.

\bibitem{rashevsky1955some}
N.~Rashevsky, Some remarks on topological biology, The bulletin of mathematical biophysics 17 (1955) 207--218.

\bibitem{rashevsky1955sometheorems}
N.~Rashevsky, Some theorems in topology and a possible biological implication, The bulletin of mathematical biophysics 17 (1955) 111--126.

\bibitem{maturana1975organization}
H.~R. Maturana, The organization of the living: A theory of the living organization, International journal of man-machine studies 7~(3) (1975) 313--332.

\bibitem{maturana1991autopoiesis}
H.~R. Maturana, F.~J. Varela, Autopoiesis and cognition: The realization of the living, Vol.~42, Springer Science \& Business Media, 1991.

\bibitem{ehresmann1987hierarchical}
A.~C. Ehresmann, J.-P. Vanbremeersch, Hierarchical evolutive systems: A mathematical model for complex systems, Bulletin of Mathematical Biology 49~(1) (1987) 13--50.

\bibitem{rosen1958relational}
R.~Rosen, A relational theory of biological systems, The bulletin of mathematical biophysics 20 (1958) 245--260.

\bibitem{rosen1958representation}
R.~Rosen, The representation of biological systems from the standpoint of the theory of categories, The bulletin of mathematical biophysics 20 (1958) 317--341.

\bibitem{karl2012free}
F.~Karl, A free energy principle for biological systems, Entropy 14~(11) (2012) 2100--2121.

\bibitem{kirchhoff2018markov}
M.~Kirchhoff, T.~Parr, E.~Palacios, K.~Friston, J.~Kiverstein, The markov blankets of life: autonomy, active inference and the free energy principle, Journal of The royal society interface 15~(138) (2018) 20170792.

\bibitem{marshall2021identifying}
S.~M. Marshall, C.~Mathis, E.~Carrick, G.~Keenan, G.~J. Cooper, H.~Graham, M.~Craven, P.~S. Gromski, D.~G. Moore, S.~I. Walker, et~al., Identifying molecules as biosignatures with assembly theory and mass spectrometry, Nature communications 12~(1) (2021) 3033.

\bibitem{sharma2023assembly}
A.~Sharma, D.~Cz{\'e}gel, M.~Lachmann, C.~P. Kempes, S.~I. Walker, L.~Cronin, Assembly theory explains and quantifies selection and evolution, Nature 622~(7982) (2023) 321--328.

\bibitem{deutsch2013constructor}
D.~Deutsch, Constructor theory, Synthese 190~(18) (2013) 4331--4359.

\bibitem{marletto2015constructor}
C.~Marletto, Constructor theory of life, Journal of The Royal Society Interface 12~(104) (2015) 20141226.

\bibitem{england2013statistical}
J.~L. England, Statistical physics of self-replication, The Journal of chemical physics 139~(12) (2013).

\bibitem{martyushev2006maximum}
L.~M. Martyushev, V.~D. Seleznev, Maximum entropy production principle in physics, chemistry and biology, Physics reports 426~(1) (2006) 1--45.

\bibitem{mossio2023introduction}
M.~Mossio, Introduction: Organization as a scientific blind spot, in: Organization in Biology, Springer, 2023, pp. 1--22.

\bibitem{kant1987critique}
I.~Kant, W.~S. Pluhar, Critique of judgment, Hackett Publishing Company, 1987.

\bibitem{kauffman2024emergence}
S.~Kauffman, A.~Roli, Is the emergence of life an expected phase transition in the evolving universe?, arXiv: 2401.09514 (2024).

\bibitem{kauffman2013life}
S.~Kauffman, What is life, and can we create it?, BioScience 63~(8) (2013) 609--610.

\bibitem{kauffman2014prolegomenon}
S.~A. Kauffman, Prolegomenon to patterns in evolution, Biosystems 123 (2014) 3--8.

\bibitem{meseguer1990petri}
J.~Meseguer, U.~Montanari, Petri nets are monoids, Information and computation 88~(2) (1990) 105--155.

\bibitem{baez2018quantum}
J.~C. Baez, J.~D. Biamonte, Quantum techniques in stochastic mechanics, World Scientific, 2018.

\bibitem{baez2021categories}
J.~C. Baez, F.~Genovese, J.~Master, M.~Shulman, Categories of nets, in: 2021 36th Annual ACM/IEEE Symposium on Logic in Computer Science (LICS), IEEE, 2021, pp. 1--13.

\bibitem{murata1989petri}
T.~Murata, Petri nets: Properties, analysis and applications, Proceedings of the IEEE 77~(4) (1989) 541--580.

\bibitem{baez2019network}
J.~C. Baez, J.~Foley, J.~Moeller, Network models from petri nets with catalysts, arXiv preprint arXiv:1904.03550 (2019).

\bibitem{barenholz2017design}
U.~Barenholz, D.~Davidi, E.~Reznik, Y.~Bar-On, N.~Antonovsky, E.~Noor, R.~Milo, Design principles of autocatalytic cycles constrain enzyme kinetics and force low substrate saturation at flux branch points, eLife 6 (2017) e20667.

\bibitem{blokhuis2020universal}
A.~Blokhuis, D.~Lacoste, P.~Nghe, Universal motifs and the diversity of autocatalytic systems, Proceedings of the National Academy of Sciences 117~(41) (2020) 25230--25236.

\bibitem{deshpande2014autocatalysis}
A.~Deshpande, M.~Gopalkrishnan, Autocatalysis in reaction networks, Bulletin of mathematical biology 76 (2014) 2570--2595.

\bibitem{ortiz2022combinatorics}
A.~Ortiz-Mu{\~n}oz, Combinatorics and Stochasticity for Chemical Reaction Networks, California Institute of Technology, 2022.

\bibitem{ortiz2025combinatorial}
A.~Ortiz-Mu{\~n}oz, A combinatorial theory of assembly systems via generating functions, arXiv preprint arXiv:2501.10721 (2025).

\bibitem{nicholson2018everything}
D.~J. Nicholson, J.~Dupr{\'e}, Everything flows: towards a processual philosophy of biology, Oxford University Press, 2018.

\bibitem{sep-process-philosophy}
J.~Seibt, {Process Philosophy}, in: E.~N. Zalta, U.~Nodelman (Eds.), The {Stanford} Encyclopedia of Philosophy, {S}pring 2025 Edition, Metaphysics Research Lab, Stanford University, 2025.

\bibitem{hertz2020nouns}
T.~Hertz, M.~Mancilla~Garc{\'\i}a, M.~Schl{\"u}ter, From nouns to verbs: How process ontologies enhance our understanding of social-ecological systems understood as complex adaptive systems, People Nature 2 (2020) 328--338.

\bibitem{dupre2014process}
J.~Dupr{\'e}, A process ontology for biology, The Philosophers’ Magazine 67 (2014) 81--88.

\bibitem{kearney2023process}
M.~Kearney, O.~Rieppel, A process ontology of organisms and its connection to biological individuality concepts, Organisms Diversity \& Evolution 23~(1) (2023) 1--6.

\bibitem{rosen1991life}
R.~Rosen, Life itself: a comprehensive inquiry into the nature, origin, and fabrication of life, Columbia University Press, 1991.

\bibitem{portilla2006filosofia}
M.~L. Portilla, La filosof{\'\i}a n{\'a}huatl: estudiada en sus fuentes, Vol.~10, Unam, 2006.

\bibitem{wiercinski1984ometeotl}
A.~Wierci{\'n}ski, Omet{\'e}otl—concepci{\'o}n de la deidad suprema en el m{\'e}xico prehisp{\'a}nico, Estudios Latinoamericanos 10 (1984) 9--32.

\bibitem{benner2010defining}
S.~A. Benner, Defining life, Astrobiology 10~(10) (2010) 1021--1030.

\bibitem{tirard2010definition}
S.~Tirard, M.~Morange, A.~Lazcano, The definition of life: a brief history of an elusive scientific endeavor, Astrobiology 10~(10) (2010) 1003--1009.

\bibitem{tsokolov2009definition}
S.~A. Tsokolov, Why is the definition of life so elusive? epistemological considerations, Astrobiology 9~(4) (2009) 401--412.

\end{thebibliography}





\end{document}